\newcommand{\R}{\mathbb{R}}
\newcommand{\N}{\mathbb{N}}
\newcommand{\ket}[1]{| #1 \rangle}
\newcommand{\braket}[1]{\langle #1 \rangle}
\newcommand{\Hil}{\mathcal{H}}
\DeclareMathOperator{\poly}{poly}
\newcommand{\dd}{\mathbf{d}}
\newcommand{\x}{\mathbf{x}}
\newcommand{\y}{\mathbf{y}}
\newcommand{\be}{\begin{equation}}
\newcommand{\ee}{\end{equation}}
\newcommand{\bea}{\begin{eqnarray}}
\newcommand{\eea}{\end{eqnarray}}
\newcommand{\bes}{\begin{equation*}}
\newcommand{\ees}{\end{equation*}}
\newcommand{\beas}{\begin{eqnarray*}}
\newcommand{\eeas}{\end{eqnarray*}}
\newtheorem*{rep@theorem}{\rep@title}
\newcommand{\newreptheorem}[2]{%
\newenvironment{rep#1}[1]{%
 \def\rep@title{#2 \ref{##1} (restated)}%
 \begin{rep@theorem}}%
 {\end{rep@theorem}}}
\newcommand{\boxalgm}[3]{
\renewcommand{\figurename}{Algorithm}
\begin{figure}[tb]
\begin{center}
\noindent \framebox{
\begin{minipage}{.95\textwidth}
#3
\end{minipage}
}
\caption{#2}
\label{#1}
\end{center}
\end{figure}
\renewcommand{\figurename}{Figure}
}
\newtheorem{thm}{Theorem}
\newtheorem*{thm*}{Theorem}
\newtheorem{lem}[thm]{Lemma}
\newtheorem*{lem*}{Lemma}
\newtheorem{claim}[thm]{Claim}
\newtheorem{dfn}{Definition}
\newcommand{\am}[1]{{\color{red}AM: #1}}
\begin{document}


\title{Quantum speedups of some general-purpose numerical optimisation algorithms}
\author{Cezar-Mihail Alexandru \and Ella Bridgett-Tomkinson \and Noah Linden \and Joseph MacManus \and Ashley Montanaro\thanks{{\tt ashley.montanaro@bristol.ac.uk}} \and Hannah Morris}
\date{\small School of Mathematics, Fry Building, University of Bristol, UK}
\maketitle

\begin{abstract}
We give quantum speedups of several general-purpose numerical optimisation methods for minimising a function $f:\R^n \to \R$. First, we show that many techniques for global optimisation under a Lipschitz constraint can be accelerated near-quadratically. Second, we show that backtracking line search, an ingredient in quasi-Newton optimisation algorithms, can be accelerated up to quadratically. Third, we show that a component of the Nelder-Mead algorithm can be accelerated by up to a multiplicative factor of $O(\sqrt{n})$. Fourth, we show that a quantum gradient computation algorithm of Gily\'en et al.\ can be used to approximately compute gradients in the framework of stochastic gradient descent. In each case, our results are based on applying existing quantum algorithms to accelerate specific components of the classical algorithms, rather than developing new quantum techniques.
\end{abstract}


\section{Introduction}

Quantum computers are designed to use quantum mechanics to outperform their classical counterparts. As well as the remarkable exponential speedups that are known for specialised problems such as integer factorisation and simulation of quantum-mechanical systems, there are also quantum algorithms which speed up general-purpose classical algorithms in the domains of combinatorial search and optimisation. These algorithms may achieve relatively modest speedups, but make up for this by having very broad applications. The most famous example is Grover's algorithm~\cite{grover97}, which achieves a quadratic speedup of classical unstructured search, and can be used to accelerate classical algorithms for solving hard constraint satisfaction problems such as Boolean satisfiability.

Here our focus is on quantum algorithms that accelerate classical numerical optimisation algorithms: that is, algorithms that attempt to solve the problem of finding $\x \in \R^n$ such that $f(\x)$ is minimised, for some function $f:\R^n \rightarrow \R$. (We use boldface throughout for elements of $\R^n$.) A vast number of optimisation algorithms are known. Some algorithms seek to find (or approximate) a global minimum of $f$, given some constraints on $f$; others only attempt to find a local minimum. Some algorithms have provable correctness and/or performance bounds, while the performance of others must be verified experimentally. Whether or not an algorithm has good theoretical properties, its performance on a given problem often can only be determined by running it. These factors have led to the development and use of many numerical optimisation algorithms based on varied techniques.

Here we consider some prominent general-purpose numerical optimisation techniques, and investigate the extent to which they can be accelerated by quantum algorithms. We stress that our goal is not to develop new quantum optimisation techniques (that perhaps would not have rigorous performance bounds), but rather to find quantum algorithms that speed up existing classical techniques, while retaining the same performance guarantees. That is, if the classical algorithm performs well in terms of solution quality or execution time on a given problem instance, the quantum algorithm should also perform well. We assume throughout that the quantum algorithm has access to an oracle that computes $f(\x)$ exactly on particular inputs $\x$, implemented as a quantum circuit\footnote{As we would like to store $\x$ in a register of qubits, technically this is only possible if we consider inputs $\x$ within a bounded region and discretised up to a certain level of precision, and assume that $f(\x)$ is also bounded. However, this is also the case for the corresponding classical algorithms that we accelerate.}. That is, we assume we have access to the map $\ket{\x}\ket{0} \mapsto \ket{\x}\ket{f(\x)}$. This contrasts with a model sometimes used elsewhere in the literature, where $\x$ is assumed to be provided to the quantum algorithm as a quantum state of $\log_2 n$ qubits~\cite{kerenidis17,rebentrost19} stored in a quantum RAM, and the goal is to produce a quantum state corresponding to $\arg\min_{\x} f(\x)$.

Our results can be summarised as follows, where we use the notation (as in the rest of the paper) $T(f)$ for an upper bound on the time required to evaluate the function $f$. See Table \ref{tab:summary} for a summary of the speedups we obtain.

\begin{itemize}
\item Section \ref{sec:lipschitz}: We show that a number of techniques for global optimisation under a Lipschitz constraint can be accelerated near-quadratically, and also discuss some challenges associated with speeding up the related and well-known classical algorithm DIRECT~\cite{jones93}. In Lipschitzian optimisation, one assumes that $|f(\x)-f(\y)| \le K \| \x - \y\|$ for some $K$ that is known in advance (the Lipschitz constant of $f$), where $\|\cdot\|$ is the Euclidean norm. Many techniques for Lipschitzian optimisation can be understood in the framework of branch-and-bound algorithms~\cite{hansen95}. These algorithms are based on dividing $f$'s domain into subsets, and using a lower-bounding procedure to rule out certain subsets from consideration. This enables the use of a quantum algorithm for speeding up branch-and-bound algorithms~\cite{montanaro20}. The complexity of branch-and-bound algorithms is controlled by a parameter $T_{\min}$ discussed below; the quantum algorithm achieves a quadratic reduction in complexity in terms of this parameter. A simple representative example of an algorithm fitting into this framework is Galperin's cubic algorithm~\cite{galperin85}. In this case, the quantum algorithm's complexity is then $\widetilde{O}(\sqrt{T_{\min}} d^{3/2} 2^n T(f) )$, where $d$ is the depth of the branch-and-bound tree, whereas the classical complexity is $O(T_{\min} 2^n T(f))$.

\item Section \ref{sec:quasinewton}: We show that backtracking line search~\cite[Algorithm 3.1]{nocedal06}, a subroutine used in many quasi-Newton optimisation algorithms such as the BFGS algorithm, can be accelerated using a quantum algorithm which is a variant of Grover search~\cite{lin15}. Backtracking line search is based on choosing a direction $\dd$ and searching along that direction. If the overall algorithm makes $k$ iterations, the complexity of choosing $\dd$ is $\tau(\dd)$, and the number of search steps taken by the classical algorithm is $m_0$, the complexity of one iteration of this classical routine is $O(\tau(\dd) + m_0 T(f))$, while the complexity of the quantum algorithm is $O(\tau(\dd) + \sqrt{m_0} (\log k) T(f))$.

\item Section \ref{sec:neldermead}: We show that the Nelder-Mead algorithm~\cite{nelder65}, a widely-used derivative-free numerical optimisation algorithm, can be accelerated using quantum minimum-finding~\cite{durr96}. The algorithm is an iterative procedure based on maintaining a simplex. Assume that $T(f) = \Omega(n^{3/2})$, 
and that the algorithm performs $k$ iterations, $s$ of which are ``shrink'' steps (qv). Then the complexity of the quantum algorithm is $O(((s+1)\sqrt{n} \log k + k)T(f)))$, as compared with the classical complexity, $O(((s+1) n + k) T(f))$. So if the number of shrink steps is large with respect to $k$, or $k$ is small, the quantum speedup can be relatively substantial (up to a $O(\sqrt{n})$ factor).

\item Section \ref{sec:sgd}: Approximate computation of a gradient is a key subroutine in many optimisation algorithms, including the very widely-used gradient descent algorithm~\cite{bottou18}. We show that the gradient of functions $f$ of the form $f(\x) = \frac{1}{N} \sum_{i=1}^N f_i(\x)$ can be computed more efficiently using a quantum algorithm of Gily\'en, Arunachalam and Wiebe~\cite{gilyen19a}. Given that each individual function $f_i$ is bounded and can be computed in time $T(f)$ (and satisfies some technical constraints on its partial derivatives), the quantum algorithm outputs an approximation of the gradient that is accurate up to $\epsilon$ in the $\ell_\infty$ norm, in time $\widetilde{O}(\sqrt{n} T(f) \epsilon^{-1})$, as compared with the classical complexity $\widetilde{O}(n T(f) \epsilon^{-2})$. (The $\widetilde{O}$ notation hides polylogarithmic factors in $N$, $n$ and $1/\epsilon$.) However, as we will discuss, it is not clear whether this notion of approximation is sufficient to accelerate classical stochastic gradient descent algorithms.
\end{itemize}

\begin{table}
\begin{center}
\resizebox{\textwidth}{!}{%
\begin{tabular}{|c|c|c|c|c|}
\hline \bf \S & \bf Algorithm & \bf Classical & \bf Quantum & \bf Technique\\
\hline \ref{sec:lipschitz} & Global opt.\ w/Lipschitz constraint & (e.g.) $O(T_{\min} 2^n T(f))$ & $\widetilde{O}(\sqrt{T_{\min}} 2^n T(f) d^{3/2})$ & Branch-and-bound~\cite{montanaro20} \\
\ref{sec:quasinewton} & Backtracking line search & $O(k(\tau(\dd) + m_{\max} T(f)))$ & $O(k(\tau(\dd) + \sqrt{m_{\max}} (\log k) T(f)))$ & Variant of Grover's algorithm~\cite{lin15} \\
\ref{sec:neldermead} & Nelder-Mead & $O(((s+1) n + k) T(f))$ & $O(((s+1)\sqrt{n} \log k + k)T(f)))$ & Quantum minimum-finding~\cite{durr96} \\
\ref{sec:sgd} & Gradients of averaged functions & $\widetilde{O}(n T(f) \epsilon^{-2})$ & $\widetilde{O}(\sqrt{n} T(f) \epsilon^{-1})$ & Quantum gradient computation~\cite{gilyen19a} \\
\hline
\end{tabular}
}
\end{center}
\caption{Informal summary of the results obtained in this paper. Parameters for algorithms are described in the respective sections of the paper, and summarised as follows. $T(f)$: complexity of computing $f:\R^n \to \R$; $T_{\min}$: size of a truncated branch-and-bound tree; $d$: depth of a branch-and-bound tree; $\tau(\dd)$: complexity of computing a descent direction; $k$: number of iterations; $m_{\max}$: worst-case number of backtracking line search steps; $s$: number of simplex shrinking steps; $\epsilon$: accuracy. The bounds make various assumptions about $f$ that are detailed in the text.}
\label{tab:summary}
\end{table}

In each case, the quantum speedups we find are based on the use of existing quantum algorithms, rather than the development of new algorithmic techniques. We believe that there are many more quantum speedups of numerical optimisation algorithms to be discovered. We remark that, in many of the cases we consider, the extent of the quantum speedup achieved depends on the interplay of various parameters governing the optimisation algorithm's runtime, so not every problem instance will yield a speedup.

Prior work on quantum speedups of numerical optimisation algorithms (as opposed to the analysis of {\em new} quantum algorithms such as the adiabatic algorithm~\cite{farhi00} or quantum approximate optimisation algorithm~\cite{hogg00,farhi14}) has been relatively limited. D\"urr and H\o yer~\cite{durr96} gave a quantum algorithm to find a global minimum of a function $f$ on a discrete space of size $N$, which is based on the use of Grover's algorithm and uses $O(\sqrt{N})$ evaluations of $f$. Arunachalam~\cite{arunachalam14} applied D\"urr and H\o yer's algorithm to improve the generalised pattern search and mesh-adaptive direct search optimisation algorithms. A sequence of papers has found quantum speedups of linear programming and semidefinite programming algorithms~\cite{brandao17,apeldoorn17,apeldoorn18,kerenidis18,brandao19}; quantum speedups of more general convex optimisation algorithms are also known~\cite{apeldoorn18a,chakrabarti18}. Quantum speedups are known for computing gradients~\cite{jordan05,gilyen19a,cornelissen19}, an important subroutine in many optimisation algorithms; larger (exponential) speedups could be available in gradient descent-type algorithms if the inputs to the optimisation algorithm are available in a quantum RAM (qRAM)~\cite{kerenidis17,rebentrost19}. Recently, it was shown that classical algorithms based on the general technique known as branch-and-bound can be accelerated near-quadratically~\cite{montanaro20}.


\section{Branch-and-bound algorithms for global optimisation with a Lipschitz constraint}
\label{sec:lipschitz}

Finding a global minimum of an arbitrary function $f:\R^n \rightarrow \R$ can be a very challenging (or indeed impossible) task. One way to make this problem more tractable is to assume that $f$ satisfies a Lipschitz condition: $|f(\x)-f(\y)| \le K \| \x - \y\|$ for some $K$ that is known in advance, where $\|\cdot\|$ is the Euclidean norm. Finding a global minimum of $f$ under this condition is known as Lipschitzian optimisation. Lipschitzian optimisation is very general and hence can be applied in many contexts. Hansen and Jaumard~\cite{hansen95} describe a selection of applications of Lipschitzian optimisation, including solution of nonlinear equations and inequalities; parametrisation of statistical models; black box system optimisation; and location problems.

It is natural to restrict the domain of $f$ to $[0,1]^n$, and to assume that $f$ is bounded such that $f(\x) \in [0,1]$ for all $\x \in [0,1]^n$. Finally, we can relax to solving the approximate optimisation problem of finding $\mathbf{y}$ such that $f(\y) - \min_{\mathbf{x} \in [0,1]^n} f(\x) \le \epsilon$, for some accuracy parameter $\epsilon$ that is determined in advance. Even in the case $n=1$ and with these restrictions, this problem is far from trivial. One class of algorithms that can solve Lipschitzian optimisation problems are branch-and-bound algorithms. Generically, a branch-and-bound algorithm solves a minimisation problem using the following procedures:
\begin{itemize}
\item A branching procedure which, given a subset $S$ of possible solutions, divides $S$ into two or more smaller subsets, or returns that $S$ should not be divided further.
\item A bounding procedure which, when given a subset $S$ produced during the branching process, returns a lower bound $L(S)$ such that $L(S) \le \min_{\x \in S} f(\x)$.
\end{itemize}
Branch-and-bound algorithms can be seen as exploring a tree, whose vertices correspond to subsets $S$. The children of a subset $S$ correspond to the subsets which $S$ was divided into, and leaves are subsets that should not be divided further. For a leaf, one should additionally have that $L(S) = \min_{\x \in S} f(\x)$. Branch-and-bound algorithms use the additional information provided by the branch and bound procedures to explore the most promising sets $S$ early on, and to avoid exploring subsets $S$ such that $L(S)$ is larger than the best solution found so far. One can show that the complexity of an optimal classical branch-and-bound algorithm based on these generic procedures is controlled by the size of the branch-and-bound tree, truncated by deleting all vertices whose corresponding lower bounds are less than the optimal cost $\min_{\x} f(\x)$: if the size of this tree is $T_{\min}$, the optimal classical algorithm makes $\Theta(T_{\min})$ calls to the branch and bound procedures~\cite{karp93}. It is not required to know $T_{\min}$ in order to apply this bound.

A generic framework for branch-and-bound algorithms in the context of Lipschitzian optimisation was given by Hansen and Jaumard~\cite[Section 3.3]{hansen95}, and we describe it as Algorithm \ref{alg:genbab}. The algorithm splits $[0,1]^n$ into hyperrectangles $I$, each of which is recursively split again. Each hyperrectangle has an associated upper bound (obtained by evaluating $f$ at a discrete set of points in that hyperrectangle) and lower bound (obtained via a separate lower-bounding function), and the algorithm terminates when it finds a hyperrectangle whose upper bound is sufficiently close to its lower bound. Convergence is guaranteed if some simple criteria are satisfied, discussed in~\cite{hansen95} (for example, the upper bound and lower bound should converge as the interval size tends to 0). Hansen and Jaumard show that many previously known algorithms for Lipschitzian optimisation can be understood as particular cases of Algorithm \ref{alg:genbab}. These include Galperin's cubic algorithm~\cite{galperin85}, which proceeds by dividing the search space into hypercubes, and algorithms of Pijavskii~\cite{pijavskii72}, Shubert~\cite{shubert72} and Mladineo~\cite{mladineo85}.

\boxalgm{alg:genbab}{Generic branch-and-bound algorithm for Lipschitzian optimisation problems~\cite{hansen95}}{
\begin{enumerate}
\item Choose a discrete set $D \subset [0,1]^n$ and set $f_{opt} \leftarrow \min_{\x \in D} f(\x)$; $\mathbf{x}_{opt} \leftarrow \arg\min_{\x \in D} \x$\\ {\small \color{blue}[Initialise upper bound]}
\item Let $F$ be a lower-bounding function of $f$ on $[0,1]^n$ and compute $F_{opt} = \min_{\x \in [0,1]^n} F(\x)$ {\small \color{blue}[Initialise lower bound]}
\item If $f_{opt} - F_{opt} \le \epsilon$, stop. Otherwise, $\mathcal{L} \leftarrow \mathcal{P} = ([0,1]^n,F_{opt})$ {\small \color{blue}[Initialise branch-and-bound tree]}
\item While $\mathcal{L}$ is nonempty:
\begin{enumerate}
\item Let $\mathcal{L}'$ be a subset of $\mathcal{L}$ chosen according to a selection rule
\item For each subproblem $\mathcal{P} = (I,F_{opt})$ in $\mathcal{L}'$:
\begin{enumerate}
\item Partition $I$ into hyperrectangles $I_1,\dots,I_p$ according to a branching rule {\small \color{blue}[Branch]}
\item For $j=1,\dots,p$:
\begin{enumerate}
\item Choose a discrete set $D_j \subset I_j$. For all $\x \in D_j$:
\begin{itemize}
\item If $f(\x) < f_{opt}$ then $f_{opt} \leftarrow f(\x)$, and $\x_{opt} \leftarrow \x$ {\small \color{blue}[Update upper bound]}
\end{itemize}
\item Compute $F^j_{opt} = \min_{\x \in I_j} F^j(\x)$, where $F^j$ is a lower-bounding function on $I_j$ {\small \color{blue}[Compute lower bound]}
\item If $f^j_{opt} - F_{opt} \le \epsilon$:
\begin{itemize}
\item then if $\mathbf{x}_{opt} \in D_j$, $f_{opt}$ is an $\epsilon$-optimal solution of problem $(I_j,F^j_{opt})$
\item else add $\mathcal{P}_j = (I_j,F^j_{opt})$ to $\mathcal{L}$ {\small \color{blue}[Explore interval $I_j$ further]}
\end{itemize}
\item Delete from $\mathcal{L}$ all subproblems $\mathcal{P}$ with $F_{opt} \ge f_{opt}$.
\end{enumerate}
\end{enumerate}
\end{enumerate}
\end{enumerate}
}

The branching procedure of Algorithm \ref{alg:genbab} fits into the standard branch-and-bound framework. Given a subset $I_j$, an upper bound is obtained by evaluating $f(\x)$ at a discrete set of positions $\x$, and a lower bound is obtained using the bounding function $F^j$. If the two are within $\epsilon$, $I_j$ should not be expanded further. Otherwise, $I_j$ is split into subsets. Algorithm \ref{alg:genbab} has a notion of selecting the next subset in $\mathcal{L}$ using a selection rule, but it is shown in~\cite{karp93} that the best possible selection rule in branch-and-bound procedures (in a query complexity sense) is to expand the subset whose bounding function is smallest\footnote{The proof of this is based on the intuition that the algorithm cannot rule out subsets whose lower bound is smaller than the cost of the optimal solution. In the setting of Lipschitzian optimisation, this only holds if the lower bounding rule is tight, in the sense that given a lower bound on $f(\x)$, for $\x \in I_j$, there exists a Lipschitz function $f$ such that this lower bound is achieved.}.

There is a quantum algorithm that can achieve a near-quadratic speedup of classical branch-and-bound algorithms~\cite{montanaro20}. The algorithm is based on the use of quantum procedures for estimating the size of an unknown tree~\cite{ambainis17}, and searching within such a tree~\cite{belovs13,belovs13a,montanaro18}. The algorithm achieves a complexity of $\widetilde{O}(\sqrt{T_{\min}} d^{3/2})$ uses of the branch and bound procedures for finding the minimum of $f$ up to accuracy $\epsilon$. In this bound $d$ is the maximal depth of the branch-and-bound tree and the $\widetilde{O}$ notation hides polylogarithmic factors in $d$, $1/\epsilon$, and $1/\delta$, where $\delta$ is the probability of failure. (We remark that the algorithm as presented in~\cite{montanaro20} assumes knowledge of an upper bound on $d$ in advance, but such a bound can be found efficiently by applying the quantum tree search algorithms of~\cite{montanaro18,belovs13,belovs13a} to the branch-and-bound tree obtained by truncating at depth $d'$, with exponentially increasing choices of $d'$, until $d'$ is found where the corresponding tree does not contain any internal vertices that have not been expanded.)

The quantum branch-and-bound algorithm can immediately be applied to Algorithm \ref{alg:genbab}. If the time complexity of the branching and bounding rules is upper-bounded by $C$, the cost of the quantum algorithm is $\widetilde{O}(\sqrt{T_{\min}} d^{3/2} C)$, as compared with the classical complexity, which is $O(T_{\min} C)$. If $T_{\min} \gg d$, the speedup of the quantum algorithm over its classical counterpart in terms of the number of uses of the branching and bounding rules is near-quadratic. If these rules in turn are relatively simple to compute compared with $T_{\min}$ (as is likely to be the case for challenging optimisation problems that occur in practice), this translates into a near-quadratic runtime speedup.

To illustrate how this approach could be applied in practice, a simple example of an algorithm fitting into this framework is Galperin's cubic algorithm~\cite{galperin85}. The branch and bound procedures are defined as follows, recalling that $K$ is the Lipschitz constant of $f$:
\begin{itemize}
\item Branch: the subproblem $I$ corresponding to a hypercube is divided into $p=q^n$ equal hypercubes, for some $q \ge 2$, by dividing each side into $q$ equal parts.
\item Lower bounding rule: Let $\mathbf{x_0}$ be an extreme point of $I$. $I$ has side length $1/q^k$ for some integer $k$. Then a lower bound is $f(\mathbf{x_0}) - \frac{K}{q^k} \sqrt{n}$, maximised over extreme points of $I$.
\item Upper bounding rule: Evaluate $f$ on the extreme points of $I$ and return the minimum value found.
\end{itemize}

Galperin's algorithm is illustrated in Figure \ref{fig:galperin} for the case $n=1$. The complexity of the branch and bounding steps is dominated by the cost of evaluating $f$ at the extreme points of each hypercube $I$, which is $O(2^n T(f))$. The quantum complexity is then $\widetilde{O}(\sqrt{T_{\min}} d^{3/2} 2^n T(f) )$, whereas the classical complexity is $O(T_{\min} 2^n T(f))$; so we see that the speedup is largest for small $n$, e.g.\ $n=O(1)$.

\begin{figure}[t]
\begin{center}
\begin{tikzpicture}[xscale=15]
\draw[yshift=1cm,smooth,blue,domain=0:1] plot ({\x},{3*\x*\x-\x});
\draw (0,0) -- (1,0);
\draw (0,-0.1) -- (0,0.1) node[above] {0};
\draw (0.125,-0.1)  node[below,red] {\begin{tabular}{c} \\ \color{blue}-1 \end{tabular}} -- (0.125,0.1) node[above] {};
\draw (0.25,-0.1)  node[below,red] {\begin{tabular}{c} 2 \\ \color{blue}-2 \end{tabular}} -- (0.25,0.1) node[above] {-0.31};
\draw (0.3125,-0.1)  node[below,red] {\begin{tabular}{c}  \\ \color{blue}-0.81 \end{tabular}} -- (0.3125,0.1) node[above] {};
\draw (0.375,-0.1)  node[below,red] {\begin{tabular}{c} 3 \\ \color{blue}-1.25 \end{tabular}} -- (0.375,0.1) node[above] {-0.33};
\draw (0.4375,-0.1)  node[below,red] {\begin{tabular}{c}  \\ \color{blue}-0.75 \end{tabular}} -- (0.4375,0.1) node[above] {};
\draw (0.5,-0.1) node[below,red] {\begin{tabular}{c} 1 \\ \color{blue}-3 \end{tabular}} -- (0.5,0.1) node[above] {-0.25};
\draw (0.75,-0.1)  node[below,red] {\begin{tabular}{c} \\ \color{blue}-1 \end{tabular}} -- (0.75,0.1) node[above] {};
\draw (1,-0.1) -- (1,0.1) node[above] {1};
\end{tikzpicture}
\end{center}
\caption{Galperin's cubic algorithm for $n=1$, $q=2$ applied to the function $f(x) = 3x^2 - 2x$ (plotted in blue) with Lipschitz bound $K=4$, which is minimised at $x=1/3$ with $f(x)=-1/3$. The result of a few steps of splitting into subintervals is shown. The centres of intervals are labelled below with the step at which they are divided into subintervals (red), and the lower bound in that interval (blue). Endpoints are labelled above with the evaluated function values, shown to two decimal places.}
\label{fig:galperin}
\end{figure}
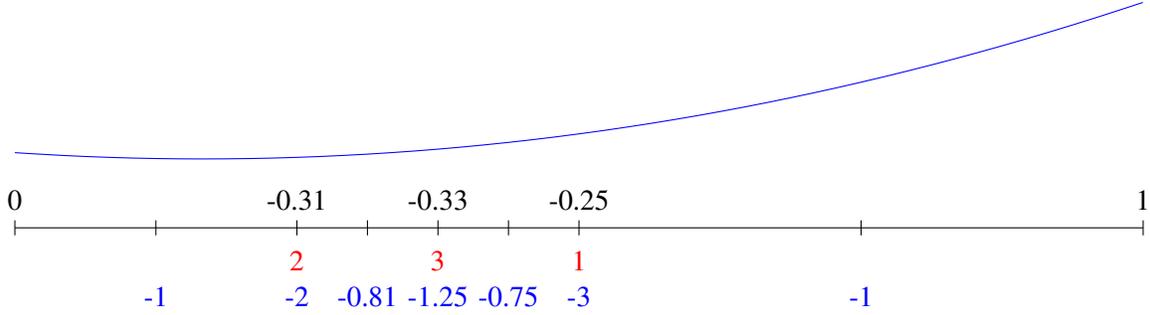

\subsection{The DIRECT algorithm}

A prominent algorithm proposed to handle Lipschitzian optimisation for $n$-variate functions where one does not know the Lipschitz constant in advance is known as DIRECT~\cite{jones93} (for ``dividing rectangles''). The basic concept is to divide $[0,1]^n$ into (hyper)rectangles, and at each step of the algorithm to produce a list of potentially optimal rectangles, which are those that should be expanded further; see Appendix \ref{app:direct} for more details. This is similar to the branch-and-bound algorithms of the previous section, but with the additional complication of generating the list of potentially optimal rectangles, which involves interaction across several nodes of the branch-and-bound tree. This creates a difficulty for the quantum branch-and-bound algorithm, as it can only use branch and bound procedures based on only local information from the tree. Therefore it is unclear whether a similar quadratic speedup can be obtained.

To identify the potentially optimal vertices, the DIRECT algorithm uses a 2d convex hull algorithm. It is a natural idea to speed this up via a quantum convex hull algorithm. Lanzagorta and Uhlmann~\cite{lanzagorta10} have described a quantum algorithm based on Grover's algorithm for computing a convex hull of $m$ points in 2d with complexity $O(\sqrt{m}h)$, where $h$ is the number of points in the convex hull; they also give an algorithm based on a heuristic whose runtime may be $O(\sqrt{mh})$ for practically relevant problems. However, the special case of the convex hull problem that is relevant to DIRECT can be solved in time $O(h)$~\cite{jones93}, so this does not lead to an overall quantum speedup.


\section{Backtracking line search}
\label{sec:quasinewton}

Backtracking line search\footnote{Not to be confused with the combinatorial optimisation technique known as backtracking.}~\cite{nocedal06} is a line search optimisation algorithm devised by Armijo in 1966~\cite{armijo66}. The goal of a line search method is, given a starting point $\mathbf{x_0} \in \R^n$ and a direction $\mathbf{d}$, to move to a new point $\mathbf{x_0} + \eta \mathbf{d}$ in the direction $\mathbf{d}$, in order to minimise a function $f:\R^n \rightarrow \R$. Backtracking line search is a particular line search technique based on the use of an exponentially decreasing parameter $\eta$.
A generic optimisation method based on backtracking line search is described as Algorithm \ref{alg:bls}. In this section we describe a quantum speedup of this algorithm.

\boxalgm{alg:bls}{Generic line search method based on backtracking line search}{
\begin{enumerate}
\item  Choose a starting point $\mathbf{x_0}$ and constants $\gamma \in (0,1)$ and $\beta \in (0,1)$. Set $\x \gets \mathbf{x_0}$.
\item Choose a direction $\mathbf{d}$ such that $D_\mathbf{d}f(x) < 0$, where $D_\mathbf{d}f = \mathbf{d} \cdot \nabla f$ is the directional derivative in direction $\mathbf{d}$. If no such $\mathbf{d}$ exists ($\nabla f = 0$), terminate.
\item Compute the step size: $\eta = \gamma^{m_0}$, where $m_0 = \min \{ m \in \mathbb{N} \mid f(\mathbf{x} + \gamma^m \mathbf{d}) \le f(\x) + \beta \gamma^m D_\mathbf{d}(f) \}$. As $D_\mathbf{d}f(\x) < 0$, $m_0$ always exists.
\item Set $\x \gets \x + \gamma^{m_0} \mathbf{d}$.
\item Go back to step 2 if termination condition is not met (number of iterations, threshold, etc.)
\end{enumerate}
}

Different approaches can be used to choose $\mathbf{d}$. These include:
\begin{itemize}
\item Steepest descent: $\mathbf{d} \propto -\nabla f(\x)$.
\item Newton's method: $\mathbf{d} \propto -H(\x)^{-1} \nabla f(\x)$, where $H(\x)$ is the Hessian of $f$.
\item Quasi-Newton methods (such as BFGS): $\mathbf{d} \propto -B(\x)^{-1} \nabla f(\x)$, where $B(\x)$ is some approximation of $H(\x)$.
\end{itemize}
Let $\tau(\dd)$ denote the complexity of choosing the direction $\mathbf{d}$; note that $\tau(\dd) = \Omega(n)$, because just writing down $\mathbf{d}$ requires time $\Omega(n)$. Then the overall complexity of one iteration of Algorithm \ref{alg:bls} is $O(\tau(\dd) + m_0 T(f))$.
We can reduce this complexity using the following result of Lin and Lin~\cite{lin15} (see also~\cite{kothari14}):

\begin{thm}[Lin and Lin~\cite{lin15}]
\label{thm:firstelem}
Consider a function $g: \{1,\dots,N\} \rightarrow \{0,1\}$. Let $m = \min \{ y: g(y) = 1\}$, if this set is nonempty, or otherwise $m=\infty$. Then there is a quantum algorithm that succeeds with probability at least 0.99 and outputs $m$ using $O(\sqrt{m})$ evaluations of $g$ if $m\neq \infty$, and otherwise outputs that $m=\infty$ in $O(\sqrt{N})$ steps.
\end{thm}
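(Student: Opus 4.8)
The plan is to establish Theorem~\ref{thm:firstelem} by an exponential-search strategy built on top of Grover search, so that the running time adapts to the (unknown) location of the first marked element $m$. First I would recall the standard ``Grover with an unknown number of solutions'' primitive: given a Boolean function on a domain of size $M$, there is a bounded-error quantum procedure that, using $O(\sqrt{M})$ evaluations, either returns a marked element or reports (correctly with high probability) that none exists. Applying this directly to the whole domain $\{1,\dots,N\}$ already handles the case $m=\infty$ in $O(\sqrt{N})$ evaluations, which matches the claimed bound, so the only real work is to locate $m$ in time $O(\sqrt{m})$ when it is finite, without knowing $m$ in advance.

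To do this I would run the search on nested prefixes of geometrically increasing length. For $j=1,2,3,\dots$ let $N_j = \min\{2^j, N\}$ and apply the unknown-count Grover search to the restriction $g|_{\{1,\dots,N_j\}}$; stop at the first stage $j^\ast$ that reports a marked element, and then use binary search (each step again a Grover search, or simply checking $g$ at a single queried index once the interval is small) within $\{1,\dots,N_{j^\ast}\}$ to pin down the exact minimum $m$. Since a marked element exists in the prefix as soon as $2^j \ge m$, we have $2^{j^\ast} \le 2m$, and the total number of evaluations is dominated by the last stage: $\sum_{j\le j^\ast} O(\sqrt{2^j}) = O(\sqrt{2^{j^\ast}}) = O(\sqrt{m})$. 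The subsequent binary-search refinement over an interval of length $O(m)$ costs $O(\log m)$ rounds, each using $O(\sqrt{m})$ queries in the worst case, which would give $O(\sqrt{m}\log m)$; to shave the logarithm one instead refines more cleverly — e.g.\ search the prefix of length $2^{j^\ast-1}$, then $2^{j^\ast-2}$, etc., so that the interval containing $m$ shrinks geometrically and the query costs form a convergent geometric series summing to $O(\sqrt{m})$ — or one appeals to the more careful analysis of~\cite{lin15,kothari14}, which is precisely designed to achieve the clean $O(\sqrt{m})$ bound.

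The main obstacle, and the place where some care is genuinely needed, is controlling the failure probability across the $O(\log m)$ (or $O(\log N)$) stages: each Grover search is only correct with constant probability, and a naive union bound over all stages degrades the success probability. The fix is standard — amplify each individual stage to success probability $1 - \delta/\poly(\log N)$ by $O(\log\log N)$ repetitions (taking majority vote, or in the search-for-a-witness setting simply re-running until a witness is found and verifying it classically with one extra query), which inflates the query count only by a polylogarithmic factor; alternatively, note that a reported marked index can always be verified deterministically with a single evaluation of $g$, so false positives are eliminated for free and only false negatives (failing to find an existing marked element) need amplification. Assembling these pieces — geometric prefix search for the upper bracket, geometric refinement for the exact value, per-stage amplification plus classical verification for correctness — yields the stated $O(\sqrt{m})$ (resp.\ $O(\sqrt{N})$) evaluation bound with success probability at least $0.99$, and for the purposes of this paper one may simply cite~\cite{lin15} for the precise constants.
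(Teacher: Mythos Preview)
The paper does not give its own proof of this theorem: it is stated as a black-box result attributed to Lin and Lin~\cite{lin15} (with a pointer also to~\cite{kothari14}) and then simply invoked in the analysis of backtracking line search. So there is nothing to compare your argument against in the paper itself.

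That said, your sketch is the standard and correct approach underlying such results: exponential search over geometrically growing prefixes to bracket $m$, followed by a bisection-style refinement over geometrically shrinking subintervals, with Grover search as the subroutine at each stage and classical verification to kill false positives. The geometric sums you identify do give the right $O(\sqrt{m})$ and $O(\sqrt{N})$ scalings. The one place where your write-up is slightly loose is the error control: per-stage amplification by $O(\log\log N)$ repetitions, as you suggest, would strictly speaking introduce a $\polylog$ factor rather than merely affecting constants, so ``cite~\cite{lin15} for the precise constants'' undersells what is being deferred. Getting the clean $O(\sqrt{m})$ bound without any logarithmic overhead requires the more careful accounting done in~\cite{lin15,kothari14} (exploiting, for instance, that a false negative at one stage is very likely corrected at the next, so errors do not need to be driven down uniformly). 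For the purposes of this paper that distinction is immaterial, since the theorem is used as a citation rather than reproved.
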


We apply this result to step 3 of the classical algorithm to achieve a square-root reduction in the dependence on $m_0$. To achieve a final probability of failure bounded by a small constant, by a union bound over the $k$ iterations, it is sufficient to repeat the algorithm of Theorem \ref{thm:firstelem} $O(\log k)$ times to achieve $O(1/k)$ failure probability at each iteration. This gives an overall complexity of the quantum algorithm which is $O(\tau(\dd) + \sqrt{m_0}(\log k) T(f))$ per iteration. If the overall algorithm makes $k$ iterations, and $m_{\max}$ is the largest value of $m_0$ for any iteration, we have an overall complexity of $O(k(\tau(\dd) + \sqrt{m_{\max}}(\log k) T(f)))$. In cases where $\tau(\dd) = O(n)$ (such as the steepest descent method), $T(f) = \Omega(n)$, and $k$ is not exponentially large in $n$, the dominant term in this complexity bound is the second one, and we always achieve a quantum speedup. The assumption $T(f) = \Omega(n)$ is natural if $f$ depends on all $n$ variables. 

This condition $f(\mathbf{x} + \eta \mathbf{d}) \le f(\x) + \beta \eta D_\mathbf{d}(f)$ that is used in step 3 is called the Armijo condition. If $\nabla f$ is Lipschitz at $\x$ with Lipschitz constant $L(\x)$ ($\|\nabla f(\x) - \nabla f(\mathbf{y})\| \le L(\x) \|\x - \mathbf{y}\|$), any
  \[ \eta \in \left[0, \frac{2(\gamma - 1) D_\mathbf{d}f(\x)}{L(\x)\|\dd\|^2}\right] \]
  satisfies the Armijo condition~\cite[Theorem 2.1]{gould2003introduction}. If we choose $\mathbf{d}$ such that $\|\mathbf{d}\|=1$, then since $\eta = \gamma^{m_0}$, $\gamma = \Omega(1)$, $m_0 = O(\log(L(\x) / |D_\mathbf{d}f(\x)|))$. Therefore, the speedup achieved by the quantum algorithm (based on this worst-case bound) will be greatest when $L$ is large (representing that $\nabla f$ could change rapidly), yet $|D_\mathbf{d}f(\x)|$ is small (representing that $f$ does not change rapidly in direction $\dd$).
  
Another way in which one might hope to speed up Algorithm \ref{alg:bls} is computing $D_\mathbf{d}f(\x)$ more efficiently. For example, a quantum algorithm was presented by Gily\'en, Arunachalam and Wiebe~\cite{gilyen19}, based on a detailed analysis of and modifications to an earlier algorithm of Jordan~\cite{jordan05}, that approximately computes $\nabla f(\x)$ for smooth functions quadratically more efficiently than classical methods (that are based e.g.\ on finite differences). However, it seems challenging to prove that such an approximation can be inserted in the backtracking line search framework without affecting the performance of the overall algorithm, in the worst case. This is because even a small change in the direction $\mathbf{d}$ can significantly change the behaviour of the algorithm, as the definition of Step 3 of Algorithm \ref{alg:bls} is such that an arbitrarily small change to the values taken by $f$ along the direction $\mathbf{d}$ can change $m_0$ substantially. See Section \ref{sec:sgd} below for a further discussion of this algorithm.

Finally, we remark that one simple way to find a direction $\mathbf{d}$ such that $D_{\mathbf{d}}(f)$ is nonzero, as required for the line search procedure, is to choose $i$ such that $\partial f / \partial \x_i$ is nonzero. Although a valid choice, in practice this could be less efficient than (for example) moving in the direction of steepest descent. The use of Grover's algorithm would reduce the complexity of this step to $O(\sqrt{n}(\log k) T(f))$, as compared with the classical $O(n T(f))$.
  

\section{Nelder-Mead algorithm}
\label{sec:neldermead}

The Nelder-Mead algorithm is a direct search optimisation algorithm; that is, one which does not require information about the gradient of the objective function. It is commonly-used and implemented within many computer algebra packages. However, little convergence theory exists and in practice it is ineffective in higher dimensions\footnote{Indeed, according to Lagarias et al.~\cite{lagarias98}, ``given all the known inefficiencies and failures of the Nelder-Mead algorithm\dots\ one might wonder why it is used at all, let alone why it is so extraordinarily popular.''.}~\cite{lagarias98,han06}. The Nelder-Mead algorithm uses expansion, reflection, contraction and shrink steps to update a simplex in $\R^n$.
A number of variants of the algorithm have been proposed. The variant we will use was analysed by Lagarias et al.~\cite{lagarias98}, and is presented as Algorithm \ref{alg:neldermead}. Algorithm \ref{alg:neldermead} does not specify a termination criterion. Termination criteria that could be used include the function values at the simplex points becoming sufficiently close; the simplex points themselves becoming sufficiently close; or an iteration limit being reached.

\boxalgm{alg:neldermead}{Nelder-Mead algorithm (see e.g.~\cite{lagarias98})}{
Let $\alpha$, $\beta$, $\gamma$, $\delta$ be parameters defined such that $\alpha>0$, $\beta>1$, $\beta>\alpha$, $0<\gamma<1$, $0<\delta<1$. Standard choices are $\alpha=1$, $\beta=2$, $\gamma=\frac{1}{2}$, $\delta=\frac{1}{2}$.
\begin{enumerate}
\item \textbf{Initialise.} Define an $n$-dimensional simplex $S$ with $n+1$ vertices, $S=\{\mathbf{x_0},\dots,\mathbf{x_n}\}$.
\item \textbf{Sort.} Order and relabel the vertices of the simplex such that $f(\mathbf{x_0})\geq f(\mathbf{x_1}) \geq\dots\geq f(\mathbf{x_n})$ and let $\mathbf{x_0}$ be the worst vertex, $\mathbf{x_1}$ the next-worst vertex and $\mathbf{x_n}$ the best vertex. Set $\mathbf{c}= \frac{1}{n} \sum^{n}_{i=1}\mathbf{x_i}$.
\item \textbf{Reflection.} Calculate the reflection point, $\mathbf{x_r}=\mathbf{c}+\alpha(\mathbf{c}-\mathbf{x_0})$. If $f(\mathbf{x_n})\leq f(\mathbf{x_r}) < f(\mathbf{x_1})$ accept reflection, replace $\mathbf{x_0}$ with $\mathbf{x_r}$ and return to step $2$.
\item \textbf{Expansion.} If $f(\mathbf{x_r}) < f(\mathbf{x_n})$, calculate the expansion point $\mathbf{x_e}=\mathbf{c}+\beta(\mathbf{x_r}-\mathbf{c})$. If $f(\mathbf{x_e})<f(\mathbf{x_r})$ accept the expansion point and replace $\mathbf{x_0}$ with $\mathbf{x_e}$, otherwise accept the reflection point and replace $\mathbf{x_0}$ with $\mathbf{x_r}$. Return to step $2$.
\item \textbf{Outside Contraction.} If $f(\mathbf{x_1})\leq f(\mathbf{x_r})<f(\mathbf{x_0})$, compute the outer contraction point, $\mathbf{x_{c1}}=\mathbf{c}+\gamma(\mathbf{x_r}-\mathbf{c})$. If $f(\mathbf{x_{c1}})\leq f(\mathbf{x_r})$, accept the outside contraction point, replace $\mathbf{x_0}$ with $\mathbf{x_{c1}}$ and return to step $2$. Else go to step $7$.
\item \textbf{Inside Contraction.} If $f(\mathbf{x_r})\geq f(\mathbf{x_0})$, calculate the inside contraction point, $\mathbf{x_{c2}}=\mathbf{c}-\gamma(\mathbf{c}-\mathbf{x_0})$. If $f(\mathbf{x_{c2}})<f(\mathbf{x_0})$ accept inside contraction, replace $\mathbf{x_0}$ with $\mathbf{x_{c2}}$ and return to step $2$. Else go to step $7$.
\item \textbf{Shrink.} For all points other than the best point, replace it with its shrink point, $\mathbf{x_i} = \delta \mathbf{x_i} + (1-\delta) \mathbf{x_n}$ for $i=0,...,n-1$. Go to step 2.
\end{enumerate}
}

In this section we describe a quantum speedup of the Nelder-Mead algorithm. We first determine the classical complexity of the algorithm, drawing on the analysis of~\cite{singer04}. The complexity of step 1 is $O(n^2)$ to write down the $n+1$ points. To analyse step 2, observe that a complete ordering of the points is never required; the only information about the ordering needed is the worst vertex $\mathbf{x_0}$, the next-worst vertex $\mathbf{x_1}$, and the best vertex $\mathbf{x_n}$. Knowledge of the identities of these points is sufficient to compute the centroid $\mathbf{c}$, and to carry out all the updates required, including the shrink step. So the first time that step 2 is executed, its complexity is $O(n^2 + n T(f))$, where the $O(n^2)$ comes from computing the centroid. Each time step 2 is executed subsequently, except following a shrink step, the required updates can be made in time $O(n)$. The complexity of each of steps 3 to 6 is $O(n + T(f))$; step 7 is $O(n^2)$. So the complexity of performing $k$ iterations, of which $s$ include a shrink step, is $O((s+1)(n^2 + n T(f)) + k(n+T(f)))$. If $T(f) = \Omega(n)$, this simplifies to $O(((s+1) n + k) T(f))$. 

The complexity of step 2, when executed for the first time or following a shrink step, can be improved using quantum minimum-finding:
\begin{thm}[D\"urr and H\o yer~\cite{durr96}]
\label{thm:quantumminimum}
Given a function $h:[N] \to \R$ and $\epsilon > 0$, there is a quantum algorithm that outputs $\arg\min_x h(x)$ with probability at least $1-\epsilon$ using $O(\sqrt{N} \log 1/\epsilon)$ evaluations of $h$.
\end{thm}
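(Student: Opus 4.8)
The plan is to follow the original approach of D\"urr and H\o yer, which reduces minimum-finding to a sequence of Grover searches. We maintain a ``threshold index'' $y$, initialised uniformly at random in $[N]$. We then repeatedly run the Boyer--Brassard--H\o yer--Tapp variant of Grover's algorithm --- the version that copes with an unknown number of marked items --- applied to the oracle that marks an index $x$ precisely when $h(x) < h(y)$ (breaking ties in the value of $h$ by index order, so that $\arg\min_x h(x)$ is well defined). Whenever such a search returns a marked index $x$, we set $y \gets x$ and continue; we cap the total number of evaluations of $h$ at $C\sqrt{N}$ for a suitable constant $C$, and output the final $y$.

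The heart of the argument is to show that, with this budget, $y$ reaches $\arg\min_x h(x)$ with at least constant probability. First I would establish the structural fact that drives the running-time analysis: because the initial index is uniform and the search subroutine returns an (essentially) uniformly random marked element, each time $y$ is updated its new value is essentially uniform among the indices whose $h$-value is strictly smaller than the current $h(y)$. Tracking the rank $r$ of the current threshold (one plus the number of strictly smaller elements), the successive ranks $r_0 > r_1 > \cdots$ therefore decrease in expectation by a constant factor per update. Since a Grover search over $N$ elements with $t \ge 1$ marked elements costs $O(\sqrt{N/t})$ evaluations in expectation, the total expected cost of reaching the minimum is $\E\big[\sum_i O(\sqrt{N/r_i})\big]$, and summing $\sqrt{N/r_i}$ against the geometric decay of the $r_i$ gives $O(\sqrt{N})$; careful bookkeeping of this sum is exactly what produces the explicit constant in D\"urr--H\o yer. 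Markov's inequality then shows that a budget of $C\sqrt{N}$ evaluations suffices for the procedure to reach $y = \arg\min_x h(x)$ with probability at least, say, $2/3$.

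To amplify the success probability to $1-\epsilon$, I would run $O(\log(1/\epsilon))$ independent repetitions of the above and output the candidate index achieving the smallest observed value of $h$ (evaluating $h$ once more at each candidate if needed to compare them reliably). A repetition reports an index that is not a minimiser only when it fails in the sense above, so by independence the overall failure probability is at most $(1/3)^{O(\log(1/\epsilon))} \le \epsilon$, while the total number of evaluations of $h$ is $O(\sqrt{N}) \cdot O(\log(1/\epsilon)) = O(\sqrt{N}\log(1/\epsilon))$, as claimed.

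The main obstacle is the expected-cost analysis in the second paragraph: one must make precise the claim that the updated threshold is essentially uniform among the smaller-valued indices --- which relies both on the uniform initialisation and on the near-uniformity of the element returned by the unknown-count Grover search --- and then control the sum $\sum_i \sqrt{N/r_i}$ over the random decreasing rank sequence to obtain an $O(\sqrt{N})$ bound. The remaining ingredients (correctness given a budget, Markov's inequality, and the amplification step) are routine.
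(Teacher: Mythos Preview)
Your proposal is a correct outline of the standard D\"urr--H\o yer argument, and indeed this is exactly the approach of the cited reference~\cite{durr96}. Note, however, that the paper itself does not give a proof of this theorem: it is stated as a known result from the literature and invoked as a black box, so there is no proof in the paper to compare against. Your sketch of the rank-based expected-cost analysis and the $O(\log(1/\epsilon))$ amplification is the right way to fill in the details if a self-contained proof were required.
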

Thus a quantum algorithm using Theorem \ref{thm:quantumminimum} can find the worst, next-worst and best vertices with failure probability $O(1/k)$ at each iteration in time $O(\sqrt{n} T(f) \log k)$ in total. This choice of failure probability is so that, by a union bound, the total probability of failure can be bounded by an arbitrarily small constant. Further, observe that the centroid can be updated in time $O(n)$ following a shrink step, as if $\mathbf{c'}$ denotes the updated centroid, then $\mathbf{c'} = \delta \mathbf{c} + (1-\delta) \mathbf{x_n}$. This does not give a quantum speedup of step 2 in all cases; the first time that step 2 is executed, if $T(f) = O(n^{3/2})$, its complexity is dominated by the $O(n^2)$ cost of computing the centroid. There also remains an $O(n^2)$ cost for updating the points at each shrink step. (There may be a more efficient way of keeping track of these shrink steps; however, we do not pursue this further here.)
Then the overall complexity of the quantum algorithm is $O((s+1)(n^2+\sqrt{n} T(f) \log k) + k(n + T(f)))$, and using a union bound over the $k$ steps, the algorithm's failure probability is bounded above by an arbitrarily small constant. If $T(f) = \Omega(n^{3/2})$, this simplifies to $O(((s+1)\sqrt{n} \log k + k)T(f))$. Comparing with the classical complexity, we see that the quantum speedup is largest when $s$ is large compared with $k$.

However, in practice shrink steps appear to be rare; in one set of experiments, only 33 shrink steps were observed in 2.9M iterations~\cite{torczon89}, and shrink steps never occur when Nelder-Mead is applied to a strictly convex function~\cite{lagarias98}. If there are no shrink steps and $T(f) = \Omega(n^{3/2})$, the complexity of the quantum algorithm is $O((\sqrt{n} \log k + k)T(f))$, while the complexity of the classical algorithm is $O((n+k)T(f))$. This is still a quantum speedup if $k = o(n)$; on the other hand, if $k = \Omega(n)$, the complexity is dominated by evaluating $f$ once at each iteration, and it is difficult to see how a quantum speedup could be achieved.

To be able to use quantum minimum-finding, we have assumed the ability to construct superpositions of the form $\frac{1}{\sqrt{n+1}} \sum_{i=0}^n \ket{i}\ket{\mathbf{x_i}}$, which enables us to evaluate $f$ in superposition. This is a quantum RAM~\cite{giovannetti08}, and quantum RAMs are often assumed to be difficult to construct; however, our requirements are very weak, because we only need the addressing to be performed in time $\widetilde{O}(n)$, rather than $O(\log n)$, which can be achieved using an explicit quantum circuit.

Finally, we consider the possibility of accelerating calculation of the centroid $\mathbf{c}$ using a quantum algorithm.
If each component of each vector $\mathbf{x_i}$ is suitably bounded (e.g.~$\|\mathbf{x_i}\|_\infty \le 1$) we could use quantum mean estimation~\cite{heinrich01,brassard11,montanaro15} to estimate each component of $\mathbf{c}$ up to accuracy $\epsilon$ in time $O((n/\epsilon) \log (n/\epsilon))$ with failure probability bounded by a small constant, where the $\log (n/\epsilon)$ term comes from reducing the failure probability for each component to $\epsilon / n$. Classical mean estimation could be used instead with an overhead of an additional $O(1/\epsilon)$ factor. This would give an overall time complexity similar to that derived above, but it is not obvious what the effect of replacing the centroid with an approximate centroid would be on the overall algorithm. For example, it is argued in~\cite{fajfar18} that random perturbations to the centroid throughout the algorithm can be beneficial.


\section{Stochastic gradient descent}
\label{sec:sgd}

One of the most widely-used, effective and simple methods for finding a local minimum of a function is gradient descent. Given a function $f:\R^n \to \R$ and an initial point $\x \in \R^n$, the algorithm moves to the point $\mathbf{x'} = \x - \eta \nabla f(\x)$, where $\eta > 0$. In application areas such as machine learning~\cite{bottou18}, one often encounters functions $f$ of the form
\be \label{eq:stochasticf} f(\x) = \frac{1}{N} \sum_{i=1}^N f_i(\x) \ee
for some ``simple'' functions $f_i(\x)$, where $N$ is large. (For example, $f_i(\x)$ could be the error of a neural network parametrised by $\x$ on the $i$'th item of training data, and we might seek to minimise the average error.) Rather than computing the exact gradient $\nabla f(\x)$ by summing $\nabla f_i(\x)$ over all $N$ choices for $i$, it is natural to approximate $\nabla f$ by sampling $k$ random indices $i_1,\dots,i_k \in [N]$ with replacement and outputting $\frac{1}{k}(\nabla f_{i_1}(\x) + \dots + \nabla f_{i_k}(\x))$. (The case $k=1$ is known as stochastic gradient descent; the sample $i_1,\dots,i_k$ is sometimes known as a mini-batch.) If $f$ satisfies the Lipschitz condition that $\|\nabla f_i(\x)\|_\infty \le 1$, to approximate $\nabla f(\x)$ up to additive error $\epsilon$ in the $\ell_\infty$ norm with failure probability $\delta$ it is sufficient to take $k = O(\epsilon^{-2} \log(n/\delta))$ by a Chernoff bound argument. Let $T(f)$ denote an upper bound on the time required to compute $f_i(\x)$ for all $i$. If we approximate $\nabla f_i(\x)$ using the finite difference method, then each approximation to $\nabla f_i(\x)$ can be computed in time $O(n T(f))$, giving a total complexity of $O(n T(f) \epsilon^{-2} \log(n/\delta))$.

The use of quantum amplitude estimation~\cite{brassard02} would improve the dependence on $\epsilon$ quadratically. Here we observe that the dependence on $n$ can also be improved quadratically, using a result of Gily\'en, Arunachalam and Wiebe~\cite{gilyen19a}. We will impose the restriction (for technical reasons) that the range of each function $f_i$ is within $[1/10,9/10]$, where these numbers could be replaced with any constants between 0 and 1. Given the more typical constraint that $f_i:\R^n \to [0,1]$ (e.g.\ if the output of $f_i$ represents a probability), $f_i$ can easily be modified to satisfy this constraint by a simple linear transformation, which does not change $\arg\min_{\x} f(\x)$.

The results of~\cite{gilyen19a} use two somewhat nonstandard oracle models which we now define. First we will consider probability access, and define what a \textit{probability oracle} is.

\begin{dfn}[Probability oracle]
Let $g : Z \to [0,1]$, where $\{ \ket z : z \in Z\}$ forms an orthonormal basis of the Hilbert space $\Hil$, and let $\Hil_A$ be an ancilla register on $q > 0$ qubits. Then an operator $U_g : \Hil \otimes \Hil_A \to \Hil \otimes \Hil_A$ is called a probability oracle for $g$ if 
$$
U_g \ket z \ket{0^n} = \ket z \left( \sqrt {1-g(z)} \ket {\psi_0} \ket 0 + \sqrt {g(z)} \ket{\psi_1} \ket 1 \right)
$$
for some arbitrary $q-1$ qubit states $\ket {\psi_0}$, $\ket {\psi_1}$. 
\end{dfn}

Essentially, within this model our objective function corresponds precisely to the probability of a certain outcome being observed upon measurement (in particular, the probability of seeing $\ket 1$ when measuring the final qubit). Indeed, given a classical description of the function $g(z)$, an oracle of this form can be constructed without a significant overhead~\cite{cao13}. The next access model we consider is access via a \textit{phase oracle}.

\begin{dfn}[Phase oracle]
Given a function $g : Z \to [0,1]$, and given that $\{ \ket z : z \in Z\}$ forms an orthonormal basis of the Hilbert space $\Hil$, then the corresponding \textit{phase oracle} $O_g : \Hil \to \Hil$ allows queries of the form
\[ O_g \ket z = e^{ig(z)} \ket z. \]
\end{dfn}

The authors of \cite{gilyen19a} showed that a probability oracle is capable of simulating a phase oracle, and vice versa, with only logarithmic overhead:

\begin{thm}[Converting between probability and phase oracles~\cite{gilyen19a}] \label{thm:probtophase}
Suppose $g : Z \to [0,1]$ is given by access to a probability oracle $U_g$ which makes use of $a$ auxiliary qubits. Then we can simulate an $\epsilon$-approximate phase oracle using $O (\log(1/\epsilon))$ queries to $U_g$; the gate complexity is the same up to a factor of $O(a)$. Similarly, suppose $g:Z \to [\delta, 1-\delta]$ is given by access to a phase oracle $O_g$. Then we can construct an $\epsilon$-approximate probability oracle for $g$ using $O (\log (1/\epsilon)/\delta)$ queries to $O_g$. The gate complexity is the same up to a factor of $O(\log(1/\epsilon) (\log \log(1/\epsilon)+ \log(1/\delta)))$.
\end{thm}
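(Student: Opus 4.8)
The plan is to treat the two directions separately. In each case I would reduce the task to a single application of quantum signal processing (QSP) / the quantum singular value transformation (QSVT) to a block-encoding built from the given oracle, and then invoke known polynomial-approximation estimates to bound the polynomial degree, which governs the query count.

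\emph{Probability oracle to phase oracle.} First I would assemble a Grover-type iterate from $U_g$: writing $\Pi_0 = I_\Hil \otimes \proj{0^q}$ and letting $\Pi_1$ project onto the subspace in which the final ancilla qubit equals $\ket 1$, the operator $G$ obtained by composing the reflection $2\Pi_1 - I$ with the reflection $U_g(2\Pi_0 - I)U_g^\dagger$ about the image of the initialised subspace decomposes, by Jordan's lemma, into two-dimensional $G$-invariant subspaces indexed by $z$, on each of which $G$ acts as a rotation by $\pm 2\theta(z)$ with $\sin^2\theta(z) = g(z)$. Applying QSP to $G$ then implements, on the initialised subspace $\{\ket z \ket{0^q}\}$, the diagonal map $\ket z \mapsto P(\cos 2\theta(z))\ket z$ for any degree-$d$ polynomial $P$ with $\|P\|_{[-1,1]} \le 1$, at the cost of $O(d)$ queries to $U_g$. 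Since $g(z) = \tfrac12(1 - \cos 2\theta(z))$, the desired phase is $e^{ig(z)} = e^{i(1-c)/2}$ at $c = \cos 2\theta(z)$; as $c \mapsto e^{i(1-c)/2}$ is entire, truncating its Chebyshev expansion yields a polynomial approximation of degree $O(\log(1/\epsilon))$ on $[-1,1]$, and the standard QSP bookkeeping (rescaling so that $\|P\| \le 1$; splitting the complex, parity-indefinite target into real and imaginary parts, handled with one extra ancilla via an LCU) produces the $\epsilon$-approximate phase oracle once the ancillas are uncomputed. The $O(a)$ gate overhead is precisely the cost of the reflections about $\ket{0^q}$.

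\emph{Phase oracle to probability oracle.} Here the plan is: (i) from $O_g$ and $O_g^\dagger$ form, by an elementary one-ancilla LCU, a block-encoding of $\diag(h(g(z)))$ for an explicit trigonometric function $h$ (for instance $h = \cos$, realised by $\tfrac12(O_g + O_g^\dagger)$); (ii) apply QSVT with a polynomial approximation of $x \mapsto c\,\sqrt{h^{-1}(x)}$, for a constant $c = \Theta(1)$ chosen so that the polynomial can be taken with norm $\le 1$, obtaining a block-encoding of $\diag(c\sqrt{g(z)})$; (iii) adjoin a fresh qubit and apply this block-encoding together with a fixed single-qubit rotation so that the amplitude of $\ket 1$ on the fresh qubit equals $c\sqrt{g(z)}$, then run one round of (oblivious) amplitude amplification to scale $c$ up to $1$, yielding exactly the probability-oracle form. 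The assumption $g(z) \in [\delta, 1-\delta]$ enters because the relevant inverse trigonometric map, and hence $\sqrt{h^{-1}}$, has a square-root branch point only a distance $\Theta(\delta^2)$ from the interval on which it must be approximated, which forces the polynomial in step (ii) to have degree $d = O(\delta^{-1}\log(1/\epsilon))$ for uniform error $\epsilon$; this is the origin of the $O(\log(1/\epsilon)/\delta)$ query bound, while the additional $O(\log(1/\epsilon)(\log\log(1/\epsilon) + \log(1/\delta)))$ gate factor is the $\widetilde{O}(\log d)$ overhead of computing and applying the $d$ QSP phase angles.

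The main obstacle is exactly this $\delta$-dependent degree estimate in the second direction: one must check that approximating the composed function $\sqrt{h^{-1}(\cdot)}$ to error $\epsilon$ costs degree only $O(\delta^{-1}\log(1/\epsilon))$ — appealing to known near-optimal polynomial approximations of $\sqrt{x}$ and of smooth/Lipschitz functions on intervals bounded away from their singularities — and that the subnormalisation accumulated through steps (i)--(iii) stays $\Theta(1)$, so that the amplitude-amplification round costs only $O(1)$ and does not reintroduce further factors of $\delta$. By contrast, the first direction needs no boundary hypothesis on $g$ precisely because $e^{ig}$ is entire, which is why it is both cheaper and free of $\delta$-dependence. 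Throughout I would cite the QSP/QSVT framework and its associated polynomial-approximation lemmas rather than reproving them.
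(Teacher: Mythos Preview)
The paper does not prove this theorem at all: it is stated as a quoted result from \cite{gilyen19a} (Gily\'en, Arunachalam and Wiebe) and used as a black box, with no proof or proof sketch given in the present paper. So there is no ``paper's own proof'' to compare your proposal against.

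That said, your sketch is broadly in line with how the result is actually established in \cite{gilyen19a}: both directions are indeed handled via block-encodings and polynomial transformations (QSP/QSVT), with the probability-to-phase direction going through a Grover-type block-encoding of (a function of) $g$ and an exponential-series approximation, and the phase-to-probability direction going through an LCU of $O_g$ and $O_g^\dagger$ followed by a polynomial approximation whose degree picks up the $1/\delta$ factor from the boundary behaviour. Your identification of where the $\delta$-dependence enters and why the first direction is $\delta$-free is correct in spirit. If you want to turn this into a self-contained proof you would need to nail down the degree bound for the composed function in step (ii) more carefully (your ``distance $\Theta(\delta^2)$ to the branch point'' heuristic is not quite the right bookkeeping, and the constant-subnormalisation claim in step (iii) needs checking), but since the present paper simply cites the result, none of this is required here.
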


What this shows is that the two access models are more-or-less equivalent in power.
%
Now we have defined probability oracles, we can show that access to probability oracles for the individual $f_i$ functions immediately gives such access for $f$ itself.

\begin{lem}\label{lem:losstoobj}
Assume we have access to each function $f_i:\R^n \to [0,1]$ via a probability oracle $U_{f_i}$. Then we can construct a probability oracle for $f$ with a single use of controlled-$U_{f_i}$ operations (in superposition) and $O(\log N)$ additional operations.
\end{lem}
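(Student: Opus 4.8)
The plan is to exploit the averaging structure $f(\x) = \frac{1}{N}\sum_{i=1}^N f_i(\x)$ together with the fact that a uniform superposition over the index $i$ interacts nicely with the probability-oracle encoding. First I would introduce an index register on $\lceil \log_2 N\rceil$ qubits and prepare the uniform superposition $\frac{1}{\sqrt N}\sum_{i=1}^N \ket{i}$ using $O(\log N)$ Hadamard-type gates (with a small amount of extra work if $N$ is not a power of two, handled by standard state-preparation, still $O(\log N)$ gates). Conditioned on $\ket{i}$ and on the target-point register $\ket{\x}$, I would apply $U_{f_i}$ to a shared ancilla register initialised to $\ket{0^q}$, so that the joint state becomes $\frac{1}{\sqrt N}\sum_i \ket{i}\ket{\x}\bigl(\sqrt{1-f_i(\x)}\ket{\psi_0^{(i)}}\ket 0 + \sqrt{f_i(\x)}\ket{\psi_1^{(i)}}\ket 1\bigr)$.

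Second, I would observe that if we now measure (or simply regard) the final flag qubit, the probability of obtaining $\ket 1$ is exactly $\frac{1}{N}\sum_i f_i(\x) = f(\x)$, independent of the index $i$ and of the unknown internal states $\ket{\psi_0^{(i)}}, \ket{\psi_1^{(i)}}$. The only subtlety is that the definition of a probability oracle requires the output to have the precise form $\ket{\x}(\sqrt{1-f(\x)}\ket{\phi_0}\ket0 + \sqrt{f(\x)}\ket{\phi_1}\ket1)$ for \emph{some} states $\ket{\phi_0},\ket{\phi_1}$, with the index register and the $U_{f_i}$-ancilla register absorbed into the ``arbitrary'' ancilla space $\Hil_A$ of the new oracle. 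So I would simply declare the combined index-plus-ancilla register to be the ancilla space of $U_f$, and define $\ket{\phi_1} = \frac{1}{\sqrt{f(\x)}}\cdot\frac{1}{\sqrt N}\sum_i \sqrt{f_i(\x)}\,\ket{i}\ket{\psi_1^{(i)}}$ and similarly $\ket{\phi_0}$ (when $f(\x) \in \{0,1\}$ the degenerate cases are handled by taking either term to be an arbitrary unit vector, matching the ``arbitrary'' clause in the definition). Normalisation of $\ket{\phi_0},\ket{\phi_1}$ is immediate from the definition of $f$, and one checks the two branches are orthogonal because they are flagged by distinct values of the final qubit.

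The main obstacle — really the only thing requiring care — is bookkeeping of the ancilla register: the definition of a probability oracle in the excerpt fixes the ancilla register to $q$ qubits with the flag being the last one, so I need the controlled-$U_{f_i}$ operations to all act on the \emph{same} $q$-qubit ancilla block (which is why a single controlled application in superposition suffices), and I need to argue that the leftover index register can be freely lumped into the arbitrary-ancilla part of the output state without violating the definition. I would state explicitly that $U_f$ uses $q + \lceil \log_2 N\rceil$ auxiliary qubits, so that the gate-complexity bookkeeping feeding into the later $\widetilde O(\sqrt n\, T(f)\,\epsilon^{-1})$ bound remains valid. The cost accounting is then: one controlled-$U_{f_i}$ (counted as a single use by treating the control as part of the operation, exactly as in the statement) plus $O(\log N)$ gates for preparing and, if one wishes, uncomputing the index superposition. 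I expect the whole argument to be short; the write-up is essentially just verifying that the natural construction meets the formal definition.
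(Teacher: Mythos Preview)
Your proposal is correct and follows essentially the same construction as the paper: prepare a uniform superposition over the index register in $O(\log N)$ gates, apply the controlled-$U_{f_i}$ once, and absorb the index register into the ancilla space so that the squared amplitude on the flag qubit equals $\frac{1}{N}\sum_i f_i(\x)=f(\x)$. The only slip is the parenthetical suggestion of ``uncomputing the index superposition'': after the controlled-$U_{f_i}$ the index register is entangled with the ancilla and cannot be disentangled, but you don't need to---leaving it in place is exactly what makes the probability-oracle definition go through.
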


\begin{proof}
We start with the superposition $\frac{1}{\sqrt{N}} \sum_{i=1}^N \ket{i} \ket{\x} \ket{0}$, where $\ket{\x}$ denotes a description of the real vector $\x$ in terms of binary, up to some digits of precision, leading to an orthonormal basis. If $N$ is a power of 2, this state can be constructed easily by applying Hadamard gates to each qubit in a register of $\log_2 N$ qubits. If not, the state can be constructed in circuit complexity $O(\log N)$ as follows: attach a register of $\lceil \log_2 N \rceil$ qubits; apply Hadamard gates to produce $\frac{1}{2^{\lceil \log_2 N \rceil}} \sum_{i=1}^{2^{\lceil \log_2 N \rceil}} \ket{i}$; compute the function ``$i \le N$'' into an ancilla qubit using an efficient comparison circuit (e.g.~\cite{draper06}); measure the ancilla qubit; and proceed only if the answer is 1. If not (which occurs with probability at most $1/2$), repeat this step. We then apply the controlled operation $\ket{i}\ket{\psi} \mapsto \ket{i} U_{f_i} \ket{\psi}$. This produces
\[ \frac{1}{\sqrt{N}} \sum_{i=1}^N \ket{i} \ket{\x} \left( \sqrt{1-f_i(\x)} \ket{\psi^{(i)}_0} \ket0 + \sqrt{f_i(\x)} \ket{\psi^{(i)}_1} \ket1 \right) \]
for some sequences of normalised states $\ket{\psi^{(i)}_0}$, $\ket{\psi^{(i)}_1}$. Rearranging subsystems, we can write this as
\[ \ket{\x}\ket{\psi_0}\ket{0} + \ket{\x} \ket{\psi_1}\ket{1} \]
for some unnormalised states $\ket{\psi_0}$, $\ket{\psi_1}$ where $\braket{\psi_1|\psi_1} = \frac{1}{N} \sum_{i=1}^N f_i(\x) = f(\x)$ as required by the definition of a probability oracle for $f$.
\end{proof}

We will use this probability oracle within the framework of the fast quantum algorithm of~\cite{gilyen19a} for computing gradients. This algorithm is applicable to functions that satisfy a certain smoothness condition. Given some analytic function $h : \R^n \to \R$, let $\partial_i h(\x) = \frac {\partial}{\partial \x_i} h(\x)$, and for any $k \in \mathbb N$, $\alpha = (\alpha_1, \ldots, \alpha_k) \in [n]^k$, let 
\[ \partial_\alpha h(\x) = \partial_{\alpha_1} \ldots \partial_{\alpha_k} h(\x). \]
The following result shows that if each function $f_i$ satisfies the required smoothness condition~\cite{gilyen19a}, we have that the overall function $f$ also satisfies the same condition.

\begin{claim}\label{thm:losssmooth}
Let $c$ be a real constant, and fix some $\x \in \R^n$. Suppose that for all $i \in [N]$ the function $f_i : \R^n \to \R$ is analytic, and that for every natural number $k$, and $\alpha \in [n]^k$, we have that 
\[ | \partial_\alpha f_i (\x) | \leq c^k k^{\frac k 2}, \]
then we have that $f$ also satisfies the same condition. 
\end{claim}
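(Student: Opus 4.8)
The plan is to exploit the fact that differentiation is a linear operation and that the hypothesised bound $c^k k^{k/2}$ is uniform in $i$, so it survives averaging. First I would note that a finite sum (and hence average) of analytic functions is analytic, so $f = \frac1N \sum_{i=1}^N f_i$ is analytic on $\R^n$; in particular all the mixed partials $\partial_\alpha f(\x)$ are well-defined. Since the sum is finite, differentiation commutes with it with no convergence issues, giving, for every $k \in \N$ and every $\alpha \in [n]^k$,
\[ \partial_\alpha f(\x) = \frac1N \sum_{i=1}^N \partial_\alpha f_i(\x). \]

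Next I would apply the triangle inequality and then the per-function hypothesis term by term:
\[ |\partial_\alpha f(\x)| \le \frac1N \sum_{i=1}^N |\partial_\alpha f_i(\x)| \le \frac1N \sum_{i=1}^N c^k k^{k/2} = c^k k^{k/2}, \]
which is exactly the claimed bound, with the \emph{same} constant $c$. This completes the argument.

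There is essentially no obstacle here: the only things to check are that differentiation passes through the (finite) averaging, which is immediate, and that the constant is not degraded — and it is not, precisely because the bound $c^k k^{k/2}$ does not depend on $i$, so averaging over $i$ leaves it unchanged. The content of the claim is simply that the smoothness class used by~\cite{gilyen19a} is closed under finite averages, which, combined with Lemma~\ref{lem:losstoobj}, lets us feed $f$ into the quantum gradient algorithm whenever each $f_i$ is admissible.
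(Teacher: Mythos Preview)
Your proof is correct and follows essentially the same approach as the paper: use the linearity of $\partial_\alpha$ on the finite average, apply the triangle inequality, and invoke the uniform bound $c^k k^{k/2}$ term by term. The paper's proof is the same one-line computation, without the extra remark on analyticity.
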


\begin{proof}
We apply the linearity of $\partial_\alpha$. Observe that 
\[ |\partial_\alpha f(\x) | = \left| \frac 1 N \sum_i \partial_\alpha f_i (\x) \right| \leq \frac 1 N \sum_i |\partial_\alpha f_i (\x) | \leq c^k k^{\frac k 2}, \]
and we are done.
\end{proof}

In fact it's not too hard to see that this claim generalises to more-or-less any bound on the partial derivatives. We can now state the result we will need from~\cite{gilyen19a}.

\begin{thm}[Gily\'en, Arunachalam and Wiebe~{\cite[Theorem 25]{gilyen19a}}]
\label{thm:fastgradient}
Suppose that $g : \R^n \to \R$ is an analytic function such that, for all $r \in \N$ and $\alpha \in [n]^r$, $|\partial_\alpha g(\x)| \le c^r r^{r/2}$. Assume access to $g$ is given by a phase oracle $O_g$. Then there exists an algorithm that outputs a vector $\widetilde{\nabla f}(\x) \in \R^n$ such that  
$
\Vert\widetilde{\nabla f}(\x) - \nabla f (\x) \Vert _\infty \leq \epsilon
$
with 99\% probability, using $\widetilde{O}(\sqrt n /\epsilon)$ queries to the oracle and additional time $\widetilde{O}(n^{3/2} / \epsilon)$.
\end{thm}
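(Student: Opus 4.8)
The plan is to realise this as the quantum gradient-estimation algorithm of Jordan, in the refined form of Gily\'en, Arunachalam and Wiebe, i.e.\ to reduce the computation of $\nabla g(\x)$ to a Fourier-sampling problem on a grid. Fix a length scale $\ell$ and consider the hypercube $\x + \ell G$, where $G$ is a uniform grid of $2^b$ points per coordinate with $b = \widetilde{O}(\log(1/\epsilon))$. I would prepare the uniform superposition $\frac{1}{\sqrt{|G|^n}} \sum_{\delta \in G^n} \ket{\delta}$, use the phase oracle $O_g$ (applied a controlled number $S$ of times, realised within the pure phase-oracle model via the conversion gadgets of Theorem \ref{thm:probtophase}) to imprint a phase $e^{iS g(\x + \ell\delta)}$, and then apply the inverse quantum Fourier transform separately in each of the $n$ coordinate registers. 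If $g$ were affine on the hypercube, the phase would factor as $e^{iSg(\x)}\prod_{j=1}^n e^{iS\ell\, \partial_j g(\x)\,\delta_j}$, the state would be a product across coordinates, and the $j$-th inverse QFT would, by standard phase-estimation analysis, return an estimate of $S\ell\,\partial_j g(\x)$, hence of $\partial_j g(\x)$, up to additive error $O(1/(S\ell))$.

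The function is not affine, and correcting for this is the technical heart of the argument and the place where the hypothesis $|\partial_\alpha g(\x)| \le c^r r^{r/2}$ enters. Instead of the single evaluation $g(\x + \ell\delta)$ one uses a high-order central-difference combination $\sum_{|k|\le m} a_k^{(2m)}\, g(\x + k\ell\delta)$, with coefficients $a_k^{(2m)}$ chosen so that this combination agrees with the linear term $\ell\,\nabla g(\x)\cdot\delta$ up to a remainder controlled, via Taylor's theorem, by the derivatives of $g$ of order $2m+1$ and above. The Gevrey-class bound then forces the effective radius of convergence to scale like $1/(c\sqrt n)$, so that with $\ell = \Theta(1/(c\sqrt n \cdot \polylog))$ the remainder decays like $e^{-\Omega(m)}$. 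Taking $m = \Theta(\log(n/\epsilon))$ makes the finite-difference bias at most $\epsilon/2$ in the $\ell_\infty$ norm, uniformly over the grid; combined with the phase-estimation error $O(1/(S\ell))$ via the triangle inequality, this shows $S = \widetilde{\Theta}(1/(\ell\epsilon)) = \widetilde{\Theta}(\sqrt n / \epsilon)$ suffices for every coordinate to be $\epsilon$-accurate.

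The rest is bookkeeping. Each of the $S = \widetilde{O}(\sqrt n/\epsilon)$ (boosted) phase-oracle applications is wrapped in arithmetic of size $\widetilde{O}(n)$ to form the shifted grid points $\x + k\ell\delta$, and the final layer is $n$ inverse QFTs on $b$-qubit registers ($\widetilde{O}(n)$ gates in total), so the query count is $\widetilde{O}(\sqrt n/\epsilon)$ and the additional time is $\widetilde{O}(n^{3/2}/\epsilon)$. To reach 99\% success one repeats the whole procedure $O(1)$ times and takes a coordinatewise median, or equivalently pads $b$ by an additive $O(\log n)$ so that a union bound over the $n$ coordinates already gives constant failure probability. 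I expect the real obstacle to be the quantitative finite-difference estimate: obtaining bias that decays exponentially in $m$ while keeping $\ell$ as large as $\Theta(1/\sqrt n)$ --- any polynomial loss there would inflate the query count --- which is precisely the delicate part of the Gily\'en--Arunachalam--Wiebe analysis and depends essentially on the Gevrey bound on the higher derivatives; a secondary subtlety is implementing the scaled phase $e^{iSg}$ and the central-difference combination purely from phase-oracle access, for which Theorem \ref{thm:probtophase} and amplitude-amplification-style gadgets are used.
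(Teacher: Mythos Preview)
The paper does not supply a proof of this statement at all: Theorem~\ref{thm:fastgradient} is quoted verbatim from Gily\'en, Arunachalam and Wiebe~\cite[Theorem~25]{gilyen19a} and used as a black box. So there is nothing in the present paper to compare your sketch against.

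That said, your outline faithfully reproduces the architecture of the proof in~\cite{gilyen19a}: uniform superposition over a grid of side $\ell=\Theta(1/(c\sqrt{n}))$, phase kickback of a $2m$-point central-difference combination to kill the nonlinear Taylor terms under the Gevrey bound $|\partial_\alpha g|\le c^r r^{r/2}$, coordinatewise inverse QFT for phase estimation, and a median over $O(\log n)$ repeats to union-bound over the $n$ coordinates. You have also correctly identified the crux, namely that the finite-difference remainder must decay fast enough in $m$ while $\ell$ stays as large as $\Theta(1/\sqrt{n})$; this is exactly where the derivative hypothesis is consumed. For the purposes of this paper no more is needed than the citation, but if you intend to include a proof sketch, what you have is accurate and appropriately flags the two genuine technical points (the quantitative central-difference bound and the phase-oracle implementation of the scaled, signed combination).
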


Note that, if the time complexity of evaluating $O_g$ is $\Omega(n)$, this dominates the overall runtime bound. We can encapsulate the combination of these results in the following theorem.

\begin{thm}
\label{thm:gradient}
Let $f$ be defined as in (\ref{eq:stochasticf}), and assume that each function $f_i$ satisfies the conditions required for Theorem \ref{thm:fastgradient} and can be computed in time $T(f)$, for some bound $T(f)$ such that $T(f) = \Omega(n)$. Then there is a quantum algorithm that outputs $\widetilde{\nabla f}(\x)$ such that $\| \widetilde{\nabla f}(\x) - \nabla f(\x) \|_\infty \le \epsilon$ with 99\% probability, in time $\widetilde{O}(\sqrt{n} T(f) \epsilon^{-1})$.
\end{thm}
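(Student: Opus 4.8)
The plan is to chain together the ingredients assembled in this section: turn each $f_i$ into a probability oracle, combine these into a probability oracle for $f$ via Lemma \ref{lem:losstoobj}, convert that into a phase oracle via Theorem \ref{thm:probtophase}, check via Claim \ref{thm:losssmooth} that $f$ inherits the smoothness hypothesis, and finally feed the phase oracle into the gradient algorithm of Theorem \ref{thm:fastgradient}, tracking the runtime at each stage.

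First I would dispose of the range restriction. Replacing each $f_i$ by $\widetilde{f}_i = \frac{1}{10} + \frac{4}{5} f_i$ maps the range into $[1/10,9/10]$ without changing $\arg\min_\x f(\x)$, scales all partial derivatives of order at least $1$ by $4/5$ (so the bound of Claim \ref{thm:losssmooth} is preserved, after replacing the constant $c$ by $\max\{c,9/10\}$ to cover the order-$0$ case), and changes $\nabla f$ only by the known factor $4/5$. Given the time-$T(f)$ circuit for $f_i$, a probability oracle $U_{\widetilde f_i}$ can be built with $O(T(f))$ gates by the standard construction of~\cite{cao13}. By Lemma \ref{lem:losstoobj}, one controlled use of the $U_{\widetilde f_i}$ in superposition, plus $O(\log N)$ further gates, yields a probability oracle $U_{\widetilde f}$ for $\widetilde f = \frac{1}{N}\sum_i \widetilde f_i$, whose range again lies in $[1/10,9/10]$.

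Next I would invoke the second half of Theorem \ref{thm:probtophase} with $\delta = 1/10$: this gives an $\epsilon'$-approximate phase oracle $O_{\widetilde f}$ using $O(\log(1/\epsilon'))$ queries to $U_{\widetilde f}$, hence at cost $\widetilde{O}(T(f))$ per query whenever $\epsilon'$ is inverse-polynomial in $n$ and $1/\epsilon$. Since $\widetilde f$ satisfies the hypothesis of Theorem \ref{thm:fastgradient} by Claim \ref{thm:losssmooth}, that theorem computes $\widetilde{\nabla \widetilde f}(\x)$ to $\ell_\infty$-error $\frac{4}{5}\epsilon$ with probability $99\%$ using $\widetilde{O}(\sqrt n/\epsilon)$ phase-oracle queries and $\widetilde{O}(n^{3/2}/\epsilon)$ additional time. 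Multiplying the query cost by $\widetilde{O}(T(f))$ gives $\widetilde{O}(\sqrt n\, T(f)/\epsilon)$; and since $T(f)=\Omega(n)$ we have $\sqrt n\, T(f)/\epsilon = \Omega(n^{3/2}/\epsilon)$, so the $\widetilde{O}(n^{3/2}/\epsilon)$ overhead is absorbed. Rescaling the output by $5/4$ recovers an approximation to $\nabla f$ of $\ell_\infty$-error at most $\epsilon$, as required.

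The one point needing care — and the main obstacle — is that Theorem \ref{thm:fastgradient} is stated for an exact phase oracle, whereas Theorem \ref{thm:probtophase} only delivers an $\epsilon'$-approximate one. I would handle this by appealing to the robustness of the gradient algorithm of~\cite{gilyen19a} under small perturbations of the oracle: since the algorithm makes only $\widetilde{O}(\sqrt n/\epsilon)$ queries, taking $\epsilon'$ polynomially small in $\epsilon/n$ keeps the accumulated error below the target, at the price of only an extra $O(\log(1/\epsilon')) = \widetilde{O}(1)$ factor, which the $\widetilde{O}$ notation hides. One should also verify, as a routine matter, that the $O(\log N)$ overhead and the ancilla count in Lemma \ref{lem:losstoobj}, together with the gate-complexity clauses of Theorem \ref{thm:probtophase}, contribute only polylogarithmic factors in $N$, $n$ and $1/\epsilon$ — again absorbed into $\widetilde{O}$. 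Assembling these estimates yields the claimed runtime $\widetilde{O}(\sqrt n\, T(f)\,\epsilon^{-1})$.
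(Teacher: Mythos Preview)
Your proposal is correct and follows essentially the same chain as the paper's proof (build probability oracles for the $f_i$, combine via Lemma~\ref{lem:losstoobj}, convert to a phase oracle for $f$ via Theorem~\ref{thm:probtophase}, and apply Theorem~\ref{thm:fastgradient}), and is in fact more careful than the paper in tracking the approximate-oracle error and the absorption of the $\widetilde{O}(n^{3/2}/\epsilon)$ overhead using $T(f)=\Omega(n)$. One small slip: to go from the probability oracle $U_{\widetilde f}$ to a phase oracle you need the \emph{first} half of Theorem~\ref{thm:probtophase} (probability $\to$ phase, cost $O(\log(1/\epsilon'))$, no $\delta$-gap needed), not the second; the $\delta=1/10$ restriction is only required in the phase $\to$ probability direction, which you never use since you build the probability oracles directly from the classical circuits.
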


\begin{proof}
Given the ability to compute each $f_i$ function in time $T(f)$, we can produce a phase oracle computing $f_i$ in time $O(T(f))$. By Theorem \ref{thm:probtophase}, and using that $f_i:\R^n \rightarrow [1/10,9/10]$, we can then obtain an operation approximating a probability oracle for $f_i$ up to error $\epsilon$ in time $\widetilde{O}(T(f))$. By Lemma \ref{lem:losstoobj}, this gives a probability oracle for $f$, at additional cost $O(\log N)$. By Theorem \ref{thm:probtophase}, we then obtain a phase oracle for $f$ at additional cost $\poly\log(N,1/\epsilon)$. This finally allows us to apply Theorem \ref{thm:fastgradient} to achieve the stated complexity.
\end{proof}

Despite Theorem \ref{thm:gradient} giving a more efficient quantum algorithm for approximately computing $\nabla f$, it is not clear whether this translates into a more efficient quantum algorithm for stochastic gradient descent, or a quantum speedup of other algorithms making use of $\nabla f$. This is because the algorithm of~\cite{gilyen19a} only outputs an approximate gradient, and one which may not be an unbiased estimate of $\nabla f$. To prove approximate convergence of stochastic gradient descent, it is not essential for the gradient estimates to be unbiased~\cite{bottou18}, and it is plausible that an approximate estimate of the gradient should lead to an approximate minimiser for $f$ being found. However, the technique used in~\cite{bottou18} to show approximate convergence in this scenario requires the 2-norm of the approximate gradient to be close to that of $\nabla f$. The algorithm of~\cite{gilyen19a} provides accuracy $\epsilon$ in the $\infty$-norm, which would only give accuracy $\epsilon\sqrt{n}$ in the 2-norm. Further, it was shown by Cornelissen~\cite{cornelissen19} that if $f$ is picked from a certain class of smooth functions, approximating $\nabla f$ up to 2-norm accuracy $\epsilon$ requires $\Omega(n / \epsilon)$ uses of a phase oracle for $f$ in the worst case, so this is not merely a technical restriction. Nevertheless, it is possible that quantum gradient estimation may be more efficient than stochastic gradient descent in practice.


\subsection*{Acknowledgements}

We would like to thank Srinivasan Arunachalam for helpful explanations of the results of~\cite{gilyen19a}. We acknowledge support from the QuantERA ERA-NET Cofund in Quantum Technologies implemented within the European Union's Horizon 2020 Programme (QuantAlgo project) and EPSRC grants EP/R043957/1 and EP/T001062/1. This project has received funding from the European Research Council (ERC) under the European Union's Horizon 2020 research and innovation programme (grant agreement No.\ 817581). No new data were created during this study.


\appendix

\section{The DIRECT algorithm}
\label{app:direct}

In this appendix we briefly describe the DIRECT (``dividing rectangles'') algorithm~\cite{jones93} for global optimisation of functions $f:[0,1]^n \to \R$, which is presented as Algorithm \ref{alg:direct}. The algorithm is based on maintaining a partition of the hypercube into hyperrectangles using the concept of ``potentially optimal'' hyperrectangles: 

\begin{dfn}
\label{dfn:potopt}
Let $\epsilon>0$, let $f_{\min}$ be the current best function value found, and let $m$ be the current number of hyperrectangles in the partition of $[0,1]^n$. Let $\mathbf{c_i}$ denote the centre of the $i$th hyperrectangle, and let $d_i$ denote the distance from the centre to the vertices. Hyperrectangle $j$ is said to be potentially optimal if there exists $\widetilde{K}>0$ such that $\forall i=1,\dots,m$,
\begin{equation}
\label{eq:1}
    f(\mathbf{c_j})-\widetilde{K}d_j\leq f(\mathbf{c_i})-\widetilde{K}d_i
\end{equation}
and 
\begin{equation}
\label{eq:2}
    f(\mathbf{c_j})-\widetilde{K}d_j\leq f_{\min}-\epsilon|f_{\min}|.
\end{equation}
\end{dfn}

We think of $\widetilde{K}$ in Definition \ref{dfn:potopt} as a surrogate for the Lipschitz constant of $f$ (which is not assumed to be known in advance). An example of the first couple of steps of dividing $[0,1]^2$ into rectangles is shown in Figure \ref{fig:Dividing the Initial Hypercube}. The set of potentially optimal hyperrectangles can be determined in time $O(m')$, where $m' \le m$ is the number of distinct interval lengths, using a convex hull technique described in~\cite{jones93} and illustrated in Figure \ref{fig:Potentially Optimal Hyperrectangles}. The conditions (\ref{eq:1}) and (\ref{eq:2}) are satisfied by the points that lie on the lower convex hull when $f(\mathbf{c}_j)$ is plotted against $d_j$ for each hyperrectangle, and we also include the point $(0,f_{\min}-\epsilon|f_{\min}|)$. In Figure \ref{fig:Potentially Optimal Hyperrectangles} the red dots represent potentially optimal hyperrectangles whereas the black dots represent hyperrectangles that are not potentially optimal.

\boxalgm{alg:direct}{DIRECT algorithm~\cite{jones93} for optimisation over $[0,1]^n$.}{
\begin{enumerate}
    \item Let $\mathbf{c}$ be the centre of $[0,1]^n$ and evaluate $f(\mathbf{c})$. Assign $f_{\min} \leftarrow f(\mathbf{c})$,  $m \leftarrow 1$,  $t \leftarrow 1$.
    \item Let $S$ be the set of potentially optimal hyperrectangles.
    \item Select any hyperrectangle $j \in S$.
    \item Evaluate hyperrectangle $j$ and decide where to divide it using the following procedure:
    \begin{enumerate}
    \item Let $I$ be the set of dimensions with maximal side length. Let $\delta$ be one-third of this maximal side length. Let $\mathbf{c}$ be the centre of hyperrectangle $j$.
    \item Evaluate $f$ at the points $\mathbf{c} \pm \delta \mathbf{e_i}$ for all $i \in I$, where $\mathbf{e_i}$ is the $i$'th vector in the standard basis.
    \item Divide the hyperrectangle containing $\mathbf{c}$ into thirds along the dimensions $i \in I$, in ascending order of $w_i = \min \{f(\mathbf{c} + \delta \mathbf{e_i}), f(\mathbf{c} + \delta \mathbf{e_i}) \}$.
    \end{enumerate}
    Let $\Delta m$ be the number of new points evaluated. Update $m \leftarrow m+\Delta m$, $f_{\min} \leftarrow $ new best min.
    \item $S \leftarrow S-\{j\}$. If $S \neq \emptyset$, go to step 3.
    \item $t \leftarrow t+1$. If $t=T$, where $T$ is the iteration limit, then stop, if not go to step 2.
    \end{enumerate}
}

\begin{figure}
\centering
\begin{subfigure}[t]{0.49\textwidth}
\centering
\begin{tikzpicture}
\draw (0,0) -- (6,0); 
\draw (0,0) -- (0,6);
\draw (0,6) -- (6,6);
\draw (6,0) -- (6,6);
\draw (2,0) -- (2,6);
\draw (4,0) -- (4,6);
\draw (2,2) -- (4,2);
\draw (2,4) -- (4,4);
\draw[fill=black] (3,1) circle (0.08);
\draw[fill=black] (3,5) circle (0.08);
\draw[fill=red!90!yellow!80!, draw=red!90!yellow!80!] (3,3) circle (0.08);
\draw[fill=black] (1,3) circle (0.08);
\draw[fill=black] (5,3) circle (0.08);
\node[below] at (3,1) {3};
\node[below] at (3,5) {2};
\node[below] at (1,3) {4};
\node[below] at (5,3) {5};
\end{tikzpicture}
\caption{Dividing the initial hypercube $[0,1]^n$} \label{fig:Dividing the Initial Hypercube}
\end{subfigure}
\begin{subfigure}[t]{0.49\textwidth}
\centering
\begin{tikzpicture}
\draw[->] (0,0) -- (5,0) node[right] {$d_j$}; 
\draw[->] (0,0) -- (0,5) node[above] {$f(\mathbf{c_j})$};
\draw (0,0.7) -- (2,1.5);
\draw (2,1.5) -- (3.25,2.5);
\draw (3.25,2.5) -- (4,3.5);
\draw[dotted] (0,1.4) -- (1.2,1.4);
\node[left] at (0,1.4) {$f_{\min}$};
\node[left] at (0,0.7) {$f_{\min}-\epsilon|f_{\min}|$};
\draw[fill=red!90!yellow!80!, draw=red!90!yellow!80!] (2,1.5) circle (0.08);
\draw[fill=red!90!yellow!80!, draw=red!90!yellow!80!] (3.25,2.5) circle (0.08);
\draw[fill=red!90!yellow!80!, draw=red!90!yellow!80!] (4,3.5) circle (0.08);
\draw[fill=cyan!40!green!70!blue!80!, draw=cyan!40!green!70!blue!80!] (0,0.7) circle (0.08);
\draw[fill=black] (1.2,1.4) circle (0.08);
\draw[fill=black] (1.2,2) circle (0.08);
\draw[fill=black] (1.2,2.4) circle (0.08);
\draw[fill=black] (1.2,3.5) circle (0.08);
\draw[fill=black] (1.2,4.3) circle (0.08);
\draw[fill=black] (2,2.8) circle (0.08);
\draw[fill=black] (2,4) circle (0.08);
\draw[fill=black] (3.25,3.35) circle (0.08);
\draw[fill=black] (3.25,3.75) circle (0.08);
\draw[fill=black] (4,4) circle (0.08);
\end{tikzpicture}
\caption{Identifying potentially optimal hyperrectangles} \label{fig:Potentially Optimal Hyperrectangles}
\end{subfigure}
\caption{Illustration of aspects of the DIRECT algorithm}
\end{figure}
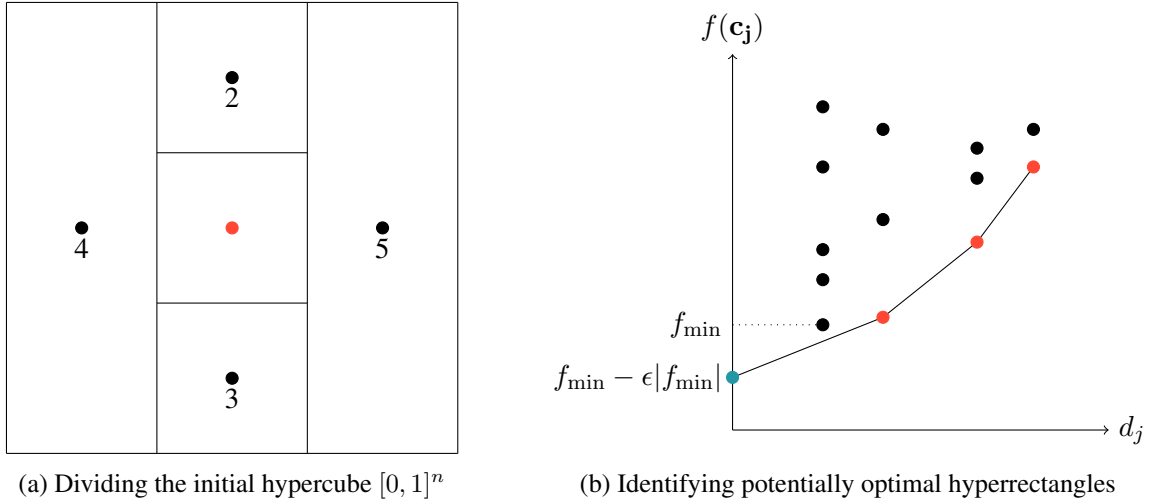

\bibliographystyle{plain}
\bibliography{../../thesis}

\begin{thebibliography}{10}

\bibitem{ambainis17}
A.~Ambainis and M.~Kokainis.
\newblock Quantum algorithm for tree size estimation, with applications to
  backtracking and 2-player games.
\newblock In {\em Proc. 49\textsuperscript{th} Annual ACM Symp. Theory of
  Computing}, pages 989--1002, 2017.

\bibitem{apeldoorn18}
J.~van Apeldoorn and A.~Gily\'en.
\newblock Improvements in quantum {SDP}-solving with applications, 2018.
\newblock {\tt arXiv:1804.05058}.

\bibitem{apeldoorn17}
J.~van Apeldoorn, A.~Gily\'en, S.~Gribling, and R.~de~Wolf.
\newblock Quantum sdp-solvers: Better upper and lower bounds.
\newblock In {\em Proc. 58\textsuperscript{th} Annual Symp. Foundations of
  Computer Science}, pages 403--414, 2017.
\newblock {\tt arXiv:1705.01843}.

\bibitem{armijo66}
L.~Armijo.
\newblock Minimization of functions having lipschitz continuous first partial
  derivatives.
\newblock {\em Pacific Journal of Mathematics}, 16(1):1--3, 1966.

\bibitem{arunachalam14}
S.~Arunachalam.
\newblock Quantum speed-ups for boolean satisfiability and derivative-free
  optimization.
\newblock Master's thesis, University of Waterloo, 2014.

\bibitem{belovs13}
A.~Belovs.
\newblock Quantum walks and electric networks, 2013.
\newblock {\tt arXiv:1302.3143}.

\bibitem{belovs13a}
A.~Belovs, A.~Childs, S.~Jeffery, R.~Kothari, and F.~Magniez.
\newblock Time-efficient quantum walks for 3-distinctness.
\newblock In {\em Proc. 40\textsuperscript{th} {I}nternational {C}onference on
  {A}utomata, {L}anguages and {P}rogramming (ICALP'13)}, pages 105--122, 2013.

\bibitem{bottou18}
L.~Bottou, F.~Curtis, and J.~Nocedal.
\newblock Optimization methods for large-scale machine learning.
\newblock {\em SIAM Review}, 2018.
\newblock {\tt arXiv:1606.04838}.

\bibitem{brandao19}
F.~Brand{\~a}o, A.~Kalev, T.~Li, C.~Y.-Y. Lin, K.~Svore, and X.~Wu.
\newblock Quantum {SDP} solvers: Large speed-ups, optimality, and applications
  to quantum learning.
\newblock In {\em Proc. 46\textsuperscript{th} {I}nternational {C}onference on
  {A}utomata, {L}anguages and {P}rogramming (ICALP'19)}, page to appear, 2019.
\newblock {\tt arXiv:1710.02581}.

\bibitem{brandao17}
F.~Brand{\~a}o and K.~Svore.
\newblock Quantum speed-ups for semidefinite programming.
\newblock In {\em Proc. 58\textsuperscript{th} Annual Symp. Foundations of
  Computer Science}, pages 415--426, 2017.
\newblock {\tt arXiv:1609.05537}.

\bibitem{brassard11}
G.~Brassard, F.~Dupuis, S.~Gambs, and A.~Tapp.
\newblock An optimal quantum algorithm to approximate the mean and its
  application for approximating the median of a set of points over an arbitrary
  distance, 2011.
\newblock {\tt arXiv:1106.4267}.

\bibitem{brassard02}
G.~Brassard, P.~H{\o }yer, M.~Mosca, and A.~Tapp.
\newblock Quantum amplitude amplification and estimation.
\newblock {\em Quantum Computation and Quantum Information: A Millennium
  Volume}, pages 53--74, 2002.
\newblock {\tt quant-ph/0005055}.

\bibitem{cao13}
Y.~Cao, A.~Papageorgiou, I.~Petras, J.~Traub, and S.~Kais.
\newblock Quantum algorithm and circuit design solving the poisson equation.
\newblock {\em New J. Phys.}, 15:013021, 2013.
\newblock {\tt arXiv:1207.2485}.

\bibitem{chakrabarti18}
S.~Chakrabarti, A.~Childs, T.~Li, and X.~Wu.
\newblock Quantum algorithms and lower bounds for convex optimization, 2018.
\newblock {\tt arXiv:1809.01731}.

\bibitem{cornelissen19}
A.~Cornelissen.
\newblock Quantum gradient estimation of {G}evrey functions, 2019.
\newblock {\tt arXiv:1909.13528}.

\bibitem{draper06}
T.~Draper, S.~Kutin, E.~Rains, and K.~Svore.
\newblock A logarithmic-depth quantum carry-lookahead adder.
\newblock {\em Quantum Inf. Comput.}, 6(4--5):351--369, 2006.
\newblock {\tt quant-ph/0406142}.

\bibitem{durr96}
C.~D{\"u}rr and P.~H{\o }yer.
\newblock A quantum algorithm for finding the minimum, 1996.
\newblock {\tt quant-ph/9607014}.

\bibitem{fajfar18}
I.~Fajfar, A.~B\H{u}rmen, and J.~Puhan.
\newblock The {Nelder-Mead} simplex algorithm with perturbed centroid for
  high-dimensional function optimization.
\newblock {\em Optimization Letters}, 13, 07 2018.

\bibitem{farhi14}
E.~Farhi, J.~Goldstone, and S.~Gutmann.
\newblock A quantum approximate optimization algorithm, 2014.
\newblock {\tt arXiv:1411.4028}.

\bibitem{farhi00}
E.~Farhi, J.~Goldstone, S.~Gutmann, and M.~Sipser.
\newblock Quantum computation by adiabatic evolution.
\newblock Technical Report MIT-CTP-2936, MIT, 2000.
\newblock {\tt quant-ph/0001106}.

\bibitem{galperin85}
E.~Galperin.
\newblock The cubic algorithm.
\newblock {\em Journal of Mathematical Analysis and Applications},
  112(2):635--640, 1985.

\bibitem{gilyen19a}
A.~Gily{\'e}n, S.~Arunachalam, and N.~Wiebe.
\newblock Optimizing quantum optimization algorithms via faster quantum
  gradient computation.
\newblock In {\em Proceedings of the Thirtieth Annual ACM-SIAM Symposium on
  Discrete Algorithms}, pages 1425--1444. Society for Industrial and Applied
  Mathematics, 2019.
\newblock {\tt arXiv:1711.00465}.

\bibitem{gilyen19}
A.~Gily{\'e}n, Y.~Su, G.~Low, and N.~Wiebe.
\newblock Quantum singular value transformation and beyond: exponential
  improvements for quantum matrix arithmetics.
\newblock In {\em Proc. 51\textsuperscript{st} Annual ACM Symp. Theory of
  Computing}, pages 193--204, 2019.
\newblock {\tt arXiv:1806.01838}.

\bibitem{giovannetti08}
V.~Giovannetti, S.~Lloyd, and L.~Maccone.
\newblock Quantum random access memory.
\newblock {\em Phys. Rev. Lett.}, 100:160501, 2008.
\newblock {\tt arXiv:0708.1879}.

\bibitem{gould2003introduction}
N.~Gould and S.~Leyffer.
\newblock An introduction to algorithms for nonlinear optimization.
\newblock In {\em Frontiers in numerical analysis}, pages 109--197. Springer,
  2003.

\bibitem{grover97}
L.~Grover.
\newblock Quantum mechanics helps in searching for a needle in a haystack.
\newblock {\em Phys. Rev. Lett.}, 79(2):325--328, 1997.
\newblock {\tt quant-ph/9706033}.

\bibitem{han06}
L.~Han and M.~Neumann.
\newblock Effect of dimensionality on the {Nelder--Mead} simplex method.
\newblock {\em Optimization Methods and Software}, 21(1):1--16, 2006.

\bibitem{hansen95}
P.~Hansen and B.~Jaumard.
\newblock Lipschitz optimization.
\newblock In {\em Handbook of Global Optimization}, pages 407--493. Springer,
  1995.

\bibitem{heinrich01}
S.~Heinrich.
\newblock Quantum summation with an application to integration.
\newblock {\em Journal of Complexity}, 18(1):1--50, 2001.
\newblock {\tt quant-ph/0105116}.

\bibitem{hogg00}
T.~Hogg and D.~Portnov.
\newblock Quantum optimization.
\newblock {\em Information Sciences}, 128:181--197, 2000.
\newblock {\tt quant-ph/0006090}.

\bibitem{jones93}
D.~Jones, C.~Perttunen, and B.~Stuckman.
\newblock Lipschitzian optimization without the {L}ipschitz constant.
\newblock {\em Journal of Optimization Theory and Application}, 79(1):157--181,
  1993.

\bibitem{jordan05}
S.~Jordan.
\newblock Fast quantum algorithm for numerical gradient estimation.
\newblock {\em Phys. Rev. Lett.}, 95:050501, 2005.
\newblock {\tt quant-ph/0405146}.

\bibitem{karp93}
R.~Karp and Y.~Zhang.
\newblock Randomized parallel algorithms for backtrack search and
  branch-and-bound computation.
\newblock {\em Journal of the ACM}, 40(3):765--789, 1993.

\bibitem{kerenidis17}
I.~Kerenidis and A.~Prakash.
\newblock Quantum gradient descent for linear systems and least squares, 2017.
\newblock {\tt arXiv:1704.04992}.

\bibitem{kerenidis18}
I.~Kerenidis and A.~Prakash.
\newblock A quantum interior point method for {LPs} and {SDPs}, 2018.
\newblock {\tt arXiv:1808.09266}.

\bibitem{kothari14}
R.~Kothari.
\newblock An optimal quantum algorithm for the oracle identification problem.
\newblock In {\em Proc. 31\textsuperscript{st} Annual Symp.\ Theoretical
  Aspects of Computer Science}, pages 482--493, 2014.
\newblock {\tt arXiv:1311.7685}.

\bibitem{lagarias98}
J.~Lagarias, J.~Reeds, H.~Wright, and P.~Wright.
\newblock Convergence properties of the {Nelder--Mead} simplex method in low
  dimensions.
\newblock {\em SIAM Journal on Optimization}, 9:112--147, 12 1998.

\bibitem{lanzagorta10}
M.~Lanzagorta and J.~Uhlmann.
\newblock Quantum algorithmic methods for computational geometry.
\newblock {\em Math. Struct. in Comp. Science}, 20:1117--1125, 2010.

\bibitem{lin15}
C.~Lin and H.~Lin.
\newblock Upper bounds on quantum query complexity inspired by the
  {E}litzur-{V}aidman bomb tester.
\newblock In {\em Proc. 30\textsuperscript{th} Annual IEEE Conf. Computational
  Complexity}, pages 537--566, 2015.
\newblock {\tt arXiv:1410.0932}.

\bibitem{mladineo85}
R.~Mladineo.
\newblock An algorithm for finding the global maximum of a multimodal,
  multivariate function.
\newblock In {\em 12th International Symposium on Mathematical Programming},
  1985.

\bibitem{montanaro15}
A.~Montanaro.
\newblock Quantum speedup of {M}onte {C}arlo methods.
\newblock {\em Proc. Roy. Soc. Ser. A}, 471(2181):20150301, 2015.
\newblock {\tt arXiv:1504.06987}.

\bibitem{montanaro18}
A.~Montanaro.
\newblock Quantum-walk speedup of backtracking algorithms.
\newblock {\em Theory of Computing}, 14(15):1--24, 2018.
\newblock {\tt arXiv:1509.02374}.

\bibitem{montanaro20}
A.~Montanaro.
\newblock Quantum speedup of branch-and-bound algorithms.
\newblock {\em Phys.\ Rev.\ Research}, 2(1):013056, 2020.
\newblock {\tt arXiv:1906.10375}.

\bibitem{nelder65}
J.~Nelder and R.~Mead.
\newblock A simplex method for function minimization.
\newblock {\em Comput. J.}, 7:308--313, 1965.

\bibitem{nocedal06}
J.~Nocedal and S.~Wright.
\newblock {\em Numerical Optimization}.
\newblock Springer, 2006.

\bibitem{pijavskii72}
S.~Pijavskii.
\newblock An algorithm for finding the absolute extremum of a function.
\newblock {\em {USSR} Computational Mathematics and Mathematical Physics},
  12:57--67, 1972.

\bibitem{rebentrost19}
P.~Rebentrost, M.~Schuld, L.~Wossnig, F.~Petruccione, and S.~Lloyd.
\newblock Quantum gradient descent and {N}ewton's method for constrained
  polynomial optimization.
\newblock {\em New J. Phys.}, 21(7):073023, 2019.
\newblock {\tt arXiv:1612.01789}.

\bibitem{shubert72}
B.~Shubert.
\newblock A sequential method seeking the global maximum of a function.
\newblock {\em {SIAM} Journal of Numerical Analysis}, 9:379--388, 1972.

\bibitem{singer04}
S.~Singer and S.~Singer.
\newblock Efficient implementation of the {Nelder-Mead} search algorithm.
\newblock {\em Appl. Num. Anal. Comp. Math.}, 1(2):524--534, 2004.

\bibitem{torczon89}
V.~Torczon.
\newblock {\em Multi-directional Search: A Direct Search Algorithm for Parallel
  Machines}.
\newblock PhD thesis, Rice University, 1989.

\bibitem{apeldoorn18a}
J.~van Apeldoorn, A.~Gily{\'e}n, S.~Gribling, and R.~de~Wolf.
\newblock Convex optimization using quantum oracles, 2018.
\newblock {\tt arXiv:1809.00643}.

\end{thebibliography}

\end{document}